\newcommand{\extra}[1]{%
	\iftoggle{extended}{#1}{}%
}
\newcommand{\altextra}[2]{%
	\iftoggle{extended}{#1}{#2}%
}
\pgfplotsset{compat=newest}
\definecolor{tudcyan}{RGB}{0,166,214}
\definecolor{tudmagenta}{RGB}{109,23,127}
\definecolor{tudpurple}{RGB}{29,28,115}
\definecolor{tudgraygreen}{RGB}{107,134,137}
\colorlet{lighttudcyan}{tudcyan!20}
\colorlet{lighttudmagenta}{tudmagenta!20}
\newlength{\hatchspread}
\newlength{\hatchthickness}
\newlength{\hatchshift}
\newcommand{\hatchcolor}{}
\tikzset{hatchspread/.code={\setlength{\hatchspread}{#1}},
	hatchthickness/.code={\setlength{\hatchthickness}{#1}},
	hatchshift/.code={\setlength{\hatchshift}{#1}},
	hatchcolor/.code={\renewcommand{\hatchcolor}{#1}}}
\tikzset{hatchspread=7pt,
	hatchthickness=0.5pt,
	hatchshift=0pt,
	hatchcolor=black}
\def\centerarc[#1](#2)(#3:#4:#5)
\newtheorem{assum}{Assumption}
\newtheorem{defn}{Definition}
\newtheorem{rem}{Remark}
\newtheorem{prop}{Proposition}
\newtheorem{lem}{Lemma}
\newtheorem{thm}{Theorem}
\newcommand*{\tran}{^{\mkern-1.5mu\mathrm{T}}\!}  
\newcommand*{\kernel}{^{\mathcal{N}}\!}
\newcommand*{\m}[1]{\begin{bmatrix} #1 \end{bmatrix}}
\newenvironment{lbmatrix}[1]
{\left[\array{@{}*{#1}{c}@{}}}
{\endarray\right]}
\newcommand*{\mm}[2]{\begin{lbmatrix}{#1} #2 \end{lbmatrix}}
\newcommand*{\bsm}[1]{\begin{bsmallmatrix} #1 \end{bsmallmatrix}}
\DeclareMathOperator{\rank}{rank}
\DeclareMathOperator{\E}{\mathbb{E}}
\def\norm[#1]{\left|#1\right|}
\def\lnorm[#1]{\left\|#1\right\|}
\def\shortnorm[#1]{|#1|}
\def\snorm[#1]{\left\|#1\right\|_\mathrm{S}}
\def\shortsnorm[#1]{\|#1\|_\mathrm{S}}
\def\fnorm[#1]{\left\|#1\right\|_\mathrm{F}}
\def\shortfnorm[#1]{\|#1\|_\mathrm{F}}
\def\q{\mathfrak{q}}
\def\z{\mathrm{z}}
\def\No{\mathbb{N}_{0}}
\def\N{\mathbb{N}}
\def\R{\mathbb{R}}
\def\C{\mathbb{C}}
\def\Rs{\mathcal{R}}
\def\Bs{\mathcal{B}}  
\def\nup{{n_{\mathrm{u}}}}
\def\nx{{n_{\mathrm{x}}}}
\def\ny{{n_{\mathrm{y}}}}
\def\nw{{n_{\mathrm{w}}}}
\def\nv{{n_{\mathrm{v}}}}
\def\nz{{n_{\mathrm{z}}}}
\def\xv{\bm{x}}
\def\yv{\bm{y}}
\def\uv{\bm{u}}
\def\wv{\bm{w}}
\def\vv{\bm{v}}
\def\rv{\bm{r}}
\def\xiv{\bm{\xi}}
\def\zv{\bm{z}}
\def\Am{\bm{A}}
\def\Bm{\bm{B}}
\def\Cm{\bm{C}}
\def\Dm{\bm{D}}
\def\I{\textbf{I}}
\def\O{\bm{0}}
\def\Fm{\bm{F}}
\def\Gm{\bm{G}}
\def\Hm{\bm{H}}
\def\Mm{\bm{M}}
\def\Nm{\bm{N}}
\def\Pm{\bm{P}}
\def\Qm{\bm{Q}}
\def\Tm{\bm{T}}
\def\Rm{\bm{R}}
\def\Xm{\bm{X}}
\def\Ym{\bm{Y}}
\def\Um{\bm{U}}
\def\Wm{\bm{W}}
\def\Vm{\bm{V}}
\def\Zm{\bm{Z}}
\def\Lm{\bm{L}}
\def\Om{\bm{O}}
\def\Jm{\bm{J}}
\title{\LARGE \bf
	Output behavior equivalence and simultaneous 
	\\ subspace identification of systems and faults}
\author{Gabriel de Albuquerquer Gleizer
	\thanks{
		G.~A.~Gleizer is with the Delft Center for Systems and Control,
			Delft Technical University, 2628 CD Delft, The Netherlands
			{\tt\small g.gleizer@tudelft.nl}}%
}
\begin{document}
	
	\maketitle
	\thispagestyle{empty}
	\pagestyle{empty}
	\begin{abstract}
		We address the problem of identifying a system subject to additive faults, while simultaneously reconstructing the fault signal via subspace methods. We do not require nominal data for the identification, neither do we impose any assumption on the class of faults, e.g., sensor or actuator faults. We show that, under mild assumptions on the fault signal, standard PI-MOESP can recover the system matrices associated to the input--output subsystem. Then we introduce the concept of output behavior equivalence, which characterizes systems with the same output behavior set, and present a method to establish this equivalence from system matrices. Finally, we show how to estimate from data the complete set of fault matrices for which there exist a fault signal with minimal dimension that explains the data.
	\end{abstract}
	

    \section{INTRODUCTION}

    In many applications, it is difficult to identify a system in a controlled environment. For example, when faulty behavior appears, it may be impractical or economically undesirable to remove the system from the field to perform a new identification experiment. For example, digital twins must enable tasks such as re-identification of system parameters, uncovering new dynamics, and detecting and reconstructing obtrusive signals that are affecting the system performance, all executed without removing the system from operation. %
    Thus, an important problem is to simultaneously estimate the matrices that describe the system model and its external disturbances; for the latter, it is desirable to both learn how they affect the system and what is the disturbance signal that explains the data. %
    Technically, we consider \emph{simultaneous identification and input reconstruction} of LTI systems of the form
    \begin{equation*}
        \begin{aligned}
            \xv(k+1) &= \Am\xv(k) + \Bm\uv(k) + \Fm\vv(k),\\
            \yv(k) &= \Cm\xv(k) + \Dm\uv(k) + \Gm\vv(k),
        \end{aligned}\label{eq:system_intro}
    \end{equation*}
    where $\xv(k), \uv(k), \yv(k),$ and $\vv(k)$ represent the unknown state, known input, known output, and unknown fault at time $k$, respectively. That is, the goal is to estimate the matrices $\Am, \Bm, \Cm, \Dm, \Fm, \Gm$ from a series of values of $\uv$ and $\yv$, as well as the corresponding trajectories of the state $\xv$ and fault $\vv$, which is also called unknown input or disturbance in the literature. We fix the word \emph{fault} for $\vv$ in this note, but whether it is to be considered a fault depends on the application.
    
    In system identification, the fault has usually a probabilistic nature; e.g., it is generally assumed to be wide-sense stationary (WSS), which enables very efficient and well-established methods (see, e.g., \cite{verhaegen2007filtering}). Unfortunately, many disturbances cannot be assumed to be WSS unless very long time horizons are taken: this is the case of environmental conditions such as temperature and wind, and the influence of other subsystems that may be connected to the system of interest. In the present work, we impose no probabilistic assumption on $\vv.$ %
    Most of the literature of data-driven fault estimation restricts the class of faults to either sensor  ($\Fm=\O, \Gm=\Dm$), actuator faults ($\Fm=\Bm, \Gm=\O$) {\cite{naderi2017data, sheikhi2024kernel}} {or load disturbances  ($\Fm=\I, \Gm=\O$) \cite{liu2020subspace}.} 
    Moreover, the identification is generally a two-step process, where in the first step a fault-free dataset is obtained, which allows identification of the system either through {geometric methods \cite{marro2010unknown}}, system matrices and/or Markov parameters {\cite{dong2011identification},} \cite{wan2017fault}, basis functions \cite{sheikhi2024kernel}, or behavioral subspaces \cite{yan2025secure}; {see \cite{ding2014data} for an overview.} Once a model is (implicitly) identified, the fault matrices can be established from the system matrices, and input reconstruction becomes fundamentally a problem of system inversion or, under stricter assumptions on the fault signal, input tracking \cite{van_der_ploeg_multiple_2022, dong2025robust}.  {Fewer works are dedicated to simultaneous identification of systems and external disturbances. Most require some extra assumption on the fault signal, e.g., it is constant, slow-varying or periodic \cite{hou2018recursive, liu2020subspace, hou2021subspace}.} One of the few papers that consider simultaneous identification and reconstruction of unknown inputs {without any requirement on their predictability} is \cite{palanthandalam2009subspace}, and as such it is the closest this work. %
    
    Another related area is blind identification \cite{xu1995least}. In this setting, the system matrices and the input are unknown, and dropping stationarity assumptions on the input makes the problem extremely challenging. For this reason, extra assumptions have to be imposed, e.g., the input is constant \cite{markovsky2015application}, or it is the output of an autonomous LTI system \cite[Sec.~8.5.1]{markovsky2018low}. Our problem is easier than general blind identification, because we assume to be in possession of  the input $\uv,$ with which the system can be excited. This allows us to impose very little assumptions on the input signal itself.

    As stated earlier, the closest work to the present one is \cite{palanthandalam2009subspace}. Therein, the system is assumed to be strongly input observable \cite{hou1998input}, and the unknown input dimension is known. We drop these assumptions to understand two fundamental problems: (i) What is the smallest unknown input dimension that explains the data? (ii) What can be learned if strong input observability does not hold? In other words, what is the set of systems and corresponding inputs that could explain the obtained output? %
    In addition, \cite{palanthandalam2009subspace} rely on strong input and state observability to estimate the nominal system matrices $(\Am,\Bm,\Cm,\Dm)$. We show that this is sufficient but not necessary. Our contributions are the following: 
    (i) Under relatively mild assumptions on the fault signal, using zero-mean ergodic inputs allows one to use standard PI-MOESP \cite{verhaegen1993subspace} to consistently identify the nominal system matrices $\Am,\Bm,\Cm,\Dm$, without the need for strong input and state observability.
     (ii) We introduce the notion of \emph{output behavioral equivalence} to compare different systems that can generate the same output trajectories. %
     (iii) Given identified system matrices, the problem can be reduced to simultaneous (blind) identification of fault matrices $\Fm, \Gm$, the initial state $\xv(0)$ and the fault signal $\vv$. We show how to determine the minimal input dimension that explains the output, and provide an algorithm to identify the set of matrices $\Fm, \Gm$ that can be used for reconstruction of compatible values of $\xv(0)$ and $\vv$.
     {Finally, we provide the Matlab code to execute the proposed methods and reproduce the results of this note in \url{https://github.com/ggleizer/fault_id}.}%

    \section{PRELIMINARIES AND PROBLEM FORMULATION}\label{sec:prob}

    \subsection{Mathematical notation}
    
	We denote by $\No$ the set of natural numbers including zero, $\N \coloneqq \No \setminus \{0\}$ and by $\R$ and $\C$ the sets of real and complex numbers, respectively. %
	We denote by $\Rs(\Am)$ the column space (range) of the matrix $\Am \in \R^{n\times m}$, by $\Am\tran$ its transpose, and by ${\lnorm[\Am]}_*$ its nuclear norm. %
    When matrix $\Am$ is left (resp.~right) invertible we denote by $\Am^{-L}$ (resp.~$\Am^{-R}$) any left (resp.~right) inverse of $\Am$. For indexed vectors or matrices $\Am_i$, the matrix $\Am_{k:l},$ $l>k$, is built by vertically stacking $\Am_i$ from $k$ to $l$.

    We denote the space of polynomial matrices with real coefficients with $n$ rows and $m$ columns by $\R[\cdot]^{n \times m}$. %
    The value of $\max_{\q\in\C}\rank(\Am(\q))$ is called the \emph{normal rank} of $\Am(\q)$, and it is equal to $\rank(\Am(\q))$ for almost all $\q\in\C$.
	
	\subsection{System model and problem statement}

    We consider discrete-time linear time-invariant systems of the form
    \begin{equation}
        \begin{aligned}
            \xv(k+1) &= \Am\xv(k) + \Bm\uv(k) + \Fm\vv(k),\\
            \yv(k) &= \Cm\xv(k) + \Dm\uv(k) + \Gm\vv(k) + \wv(k),
        \end{aligned}\label{eq:system}
    \end{equation}
    where $\xv(k) \in \R^\nx, \uv(k)\in \R^\nup, \yv(k)\in \R^\ny, \vv(k)\in \R^\nv,$ and $\wv(k)\in \R^\nw$ represent the state, input, output, fault, and noise values at time $k$, respectively. All matrices are unknown, but the pairs $(\Am,\Cm)$ and $(\Am,\Bm)$ are assumed to be observable and controllable, respectively, and $\Am$ is assumed to be Schur, i.e., all its eigenvalues are strictly inside the complex unit circle. Input and output signals are assumed to be available data, but states, faults and noise are not; moreover, one can freely choose the input $\uv(k)$.
    The distinction between the unknown signals $\vv(k)$ and $\wv(k)$ is of nature: $\wv(k)$ is a realization of a second-order ergodic, WSS random process  with zero mean; it represents colored noise, which can include measurement and process noise. In contrast, no probabilistic assumption is made on $\vv$. This means that $\vv$ can be generated adversarially --- although with limited powers to the adversary---, have time-varying average, or be the result of a nonlinear phenomenon. A final distinction is that we are interested in recovering the signal $\vv$, whereas this is not the case for $\wv$.

    {\bf Problem statement.} Our goal is to determine under which conditions it is possible to identify the system matrices and recover (a segment of) the signal $\vv$. We want a constructive answer to such conditions; i.e., whenever they hold, we provide an algorithm to perform such a task. Whenever the matrices are not uniquely identifiable, we propose to find the set of solutions to the underdetermined problem.

    \section{IDENTIFICATION OF INPUT-OUTPUT SUBSYSTEM}\label{sec:id}

We follow a subspace approach to the identification process. We start by applying a standard reformulation of \eqref{eq:system} so that the fault residual appears directly at the output:
 \begin{equation}
	\begin{aligned}
		\tilde{\xv}(k+1) &= \Am\tilde{\xv}(k) + \Bm\uv(k),\\
		\yv(k) &= \Cm\tilde{\xv}(k) + \Dm\uv(k) + \rv(k) + \wv(k),
	\end{aligned}\label{eq:system_fault_output}
\end{equation}
with the residual generating process
 \begin{equation}
	\begin{aligned}
		\xiv(k+1) &= \Am\xiv(k) + \Fm\vv(k),\\
		\rv(k) &= \Cm\xiv(k) + \Gm\vv(k),
	\end{aligned}\label{eq:residual_system}
\end{equation}
so that $\xiv = \xv - \tilde{\xv}.$ Now, we write the data equation for \eqref{eq:system_fault_output}:
\begin{equation}
    \Ym_{2s} = \Om_{2s}\tilde{\Xm}_{T} + \Tm_{2s}\Um_{2s,T} + \Rm_{2s,T} + \Wm_{2s,T}, \label{eq:data_equation}
\end{equation} 
where
\begin{gather*}
	\Om_{2s} \coloneqq \m{\Cm \\ \Cm\Am \\ \vdots \\ \Cm\Am^{2s-1}}, \quad \Tm_{2s} \coloneqq \m{\Dm \\ \Cm\Bm & \Dm \\ \vdots & & \ddots \\ \Cm\Am^{2s-2}\Bm & \cdots & \Cm\Bm & \Dm}, \\
	\tilde{\Xm}_T = \m{\tilde{\xv}(0) & \tilde{\xv}(1) & \cdots & \tilde{\xv}(T)}, \\
	\Um_{2s,T} = \m{\uv(0) & \uv(1) & \cdots & \uv(T-1) \\ \uv(1) & \uv(2) & \cdots & \uv(T) \\  \vdots & \vdots & \ddots & \vdots \\ \uv(2s-1) & \uv(2s) & \cdots & \uv(T+2s-2)},
\end{gather*}
with  $\Rm_{2s,T}$ and $\Wm_{2s,T}$ being built in block-Hankel form as $ \Um_{2s,T}$ from $\rv$ and $\wv$, respectively; $T$ is the number of data samples and $s$ is a window parameter satisfying $T > 2s > 2\nx$ so that the matrix $\Um_{2s,T}$ has more columns than rows.

We start with the following assumptions on the input and fault signals:
\begin{assum}
	\label{ass:ergodic_input}
The input $\uv(k)$ is zero-mean \emph{wide-sense ergodic}, i.e.,  $\lim_{n\to\infty}\frac{1}{n}\sum_{k=0}^n\uv(k) = 0$ and the corresponding matrix \[\lim_{n\to\infty}\frac{1}{n}\Um_{s,N}\Um_{s,N}\tran\] is full rank.
\end{assum}

\begin{assum}
	\label{ass:cross_covariance_zero}
The input and fault signals $\uv$ and $\vv$ satisfy, for all $l \in \N_0,$
\begin{equation*}
    \lim_{n\to\infty}\frac{1}{n}\sum_{k=0}^n\uv(k-l)\vv(k)\tran = \O.
\end{equation*}
\end{assum}

The first assumption is an experiment design assumption, and similar to many existing assumptions in the system identification literature. The second assumption is more restrictive as it imposes conditions on the fault signal, but still rather mild. Under Assumption \ref{ass:ergodic_input}, it holds if, e.g., $\vv(k)$ is bounded for all $k$ (by properties of limit of the product) and is uncorrelated with past values of $\uv$. For this last condition to hold, we essentially impose that the process that generates has no \emph{linear} feedback component \emph{either on $\yv, \xv,$ or $\uv$.} It is well known that the identification problem has no unique solution when faults can include linear feedback terms, see, e.g., \cite{verhaegen2022data}. In practice, if the fault signal includes a linear feedback component, then the identified system will have this feedback component \emph{incorporated} in the system matrices. As $\uv$ does not enter \eqref{eq:residual_system}, we have that $ \lim_{n\to\infty}\frac{1}{n}\sum_{k=0}^n\uv(k-l)\vv(k)\tran = \O$ implies $ \lim_{n\to\infty}\frac{1}{n}\sum_{k=0}^n\uv(k-l)\rv(k)\tran = \O.$

Finally, the following standard assumption is given:
\begin{assum}
	\label{ass:noise}
    The noise $\wv$ is a zero-mean wide-sense stationary process and is uncorrelated with the input, i.e., $\E[\wv(k)\uv(k-l)\tran] = \O$ for all lags $l\geq 0$.
\end{assum}

With Assumptions \ref{ass:ergodic_input}, \ref{ass:cross_covariance_zero}, we have that $\lim_{N\to\infty}\frac{1}{N}\Vm_{s,N}\Um_{s,N}\tran = 0$ the same way $\lim_{N\to\infty}\frac{1}{N}\Wm_{s,N}\Um_{s,N}\tran = 0$ (by Assumption \ref{ass:noise}). Therefore, the same conditions for the PI-MOESP method \cite{verhaegen1993subspace} to provide consistent estimates of $\Am,\Bm,\Cm,\Dm$ apply here, and we have the following result:
\begin{prop}
Let Assumptions \ref{ass:ergodic_input}--\ref{ass:noise} hold. Then, given data sequences $\yv(k),$ $\uv(k),$ the PI-MOESP method as described in \cite{verhaegen1993subspace},\cite[Section 9.5]{verhaegen2007filtering} provides a consistent estimate of the matrices $\Am,\Bm,\Cm,\Dm$.    
\end{prop}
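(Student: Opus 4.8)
\emph{Proof idea.}\quad The plan is to show that the reformulated data equation \eqref{eq:data_equation} is, asymptotically, indistinguishable from the data equation of a noisy deterministic LTI system of the kind treated by PI-MOESP, with the composite term $\Rm_{2s,T}+\Wm_{2s,T}$ playing the role of the output ``noise''; the Proposition then follows by invoking the consistency theorem of \cite{verhaegen1993subspace} (equivalently \cite[Section~9.5]{verhaegen2007filtering}) verbatim, which delivers $(\Am,\Bm,\Cm,\Dm)$ up to a common state similarity, as is standard for subspace methods. Recall that PI-MOESP partitions the block-Hankel data into a past window (block rows $0$ to $s-1$) and a future window (block rows $s$ to $2s-1$), uses the past-input Hankel matrix $\Um_{s,T}$ as an instrumental variable, eliminates the influence of the future inputs through an LQ factorization, recovers the column space of the extended observability matrix from the dominant left singular subspace of the sub-block of the triangular factor that pairs future outputs with past inputs, reads $(\Am,\Cm)$ off its shift-invariance, and obtains $(\Bm,\Dm)$ from a subsequent least-squares step. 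Consistency of this chain rests on three ingredients: (i) the instrument $\Um_{s,T}$ has an asymptotically positive-definite normalized Gram matrix; (ii) the extended observability matrix $\Om_{2s}$ has rank $\nx$ (indeed $\nx$ block rows already suffice); and (iii) the output noise term is asymptotically uncorrelated with the instrument, i.e., its normalized cross-covariance with $\Um_{s,T}$ tends to zero.

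First, ingredients (i) and (ii) require no new work. Statement (ii) is observability of $(\Am,\Cm)$ together with $2s>2\nx$; statement (i) is precisely Assumption~\ref{ass:ergodic_input}, while the extra rank conditions on the state block $\tilde{\Xm}_T$ invoked in \cite{verhaegen1993subspace} hold exactly as in the disturbance-free analysis, using controllability of $(\Am,\Bm)$, $\Am$ being Schur, and the persistency of excitation embedded in Assumption~\ref{ass:ergodic_input}. The $\Wm_{2s,T}$ half of (iii) is Assumption~\ref{ass:noise} verbatim. Hence the sole genuinely new point is that the residual contribution is also asymptotically decorrelated from past inputs, namely $\tfrac1N\Rm_{s,N}\Um_{s,N}\tran\to\O$, equivalently $\lim_{N\to\infty}\tfrac1N\sum_k\uv(k-l)\rv(k)\tran=\O$ for the relevant lags $l$ --- the implication already recorded in the text right after Assumption~\ref{ass:noise}.

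To prove that implication I would unroll the residual generator \eqref{eq:residual_system}. Because $\Am$ is Schur, $\lnorm[\Am^k]\le C\rho^k$ for some $C>0$ and $\rho\in(0,1)$, and
\[
\rv(k)=\Cm\Am^{k}\xiv(0)+\Gm\vv(k)+\sum_{i=1}^{k}\Cm\Am^{i-1}\Fm\,\vv(k-i).
\]
Since $\uv$ does not enter \eqref{eq:residual_system}, $\rv$ is a linear functional of $\vv$ and of the fixed vector $\xiv(0)$ only, and the three pieces are handled separately. The transient $\tfrac1N\sum_k\uv(k-l)\bigl(\Cm\Am^{k}\xiv(0)\bigr)\tran$ is $\bigO(N^{-1/2})$: a Cauchy--Schwarz bound yields $\tfrac1N\bigl(\sum_k\lnorm[\uv(k-l)]^2\bigr)^{1/2}\bigl(\sum_k\lnorm[\Cm\Am^k\xiv(0)]^2\bigr)^{1/2}$, where the first factor is $\bigO(\sqrt N)$ because Assumption~\ref{ass:ergodic_input} entails a finite second-moment rate for $\uv$ and the second factor is bounded since $\sum_k\rho^{2k}<\infty$. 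The memoryless piece $\tfrac1N\sum_k\uv(k-l)\vv(k)\tran\Gm\tran$ vanishes directly by Assumption~\ref{ass:cross_covariance_zero}. For the convolution, interchanging the two finite sums gives $\sum_{i\ge1}\bigl(\tfrac1N\sum_k\uv(k-l)\vv(k-i)\tran\bigr)\Fm\tran(\Am^{i-1})\tran\Cm\tran$; each inner term is, after re-indexing, a normalized cross-covariance of $\uv$ with a time-shifted copy of $\vv$ and therefore tends to $\O$ by Assumption~\ref{ass:cross_covariance_zero}, while the geometric weights $\lnorm[\Cm\Am^{i-1}\Fm]\le\lnorm[\Cm]\lnorm[\Fm]\,C\rho^{i-1}$ provide an absolutely summable dominating series, legitimizing the termwise passage to the limit. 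Summing the three contributions yields $\tfrac1N\Rm_{s,N}\Um_{s,N}\tran\to\O$, hence (iii), and consistency follows.

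I expect the crux to be exactly this last step: transferring the signal-level decorrelation of Assumption~\ref{ass:cross_covariance_zero} through the infinite, exponentially decaying impulse response of \eqref{eq:residual_system}, while simultaneously dispatching the transient generated by the unknown $\xiv(0)$ and checking that the shifted cross-covariances produced by the convolution are of the form controlled by Assumption~\ref{ass:cross_covariance_zero}. Everything else is a direct appeal to the established PI-MOESP consistency analysis, once it is recognized that $\Rm_{2s,T}+\Wm_{2s,T}$ inherits, asymptotically, the very decorrelation-from-inputs property enjoyed by classical process and measurement noise.
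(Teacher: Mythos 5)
Your proposal follows the same route as the paper: both reduce the claim to the standard PI-MOESP consistency analysis by observing that the fault residual $\rv$, like the noise $\wv$, is asymptotically decorrelated from the past inputs used as instruments, so that $\Rm_{2s,T}+\Wm_{2s,T}$ can be absorbed into the admissible output-noise term. The paper merely asserts the key implication from Assumption~\ref{ass:cross_covariance_zero} to $\lim_{n\to\infty}\tfrac1n\sum_{k}\uv(k-l)\rv(k)\tran=\O$ on the grounds that $\uv$ does not enter \eqref{eq:residual_system}, whereas you prove it by unrolling the impulse response of the residual generator --- a worthwhile elaboration, with the minor caveat (shared by the paper's own assertion) that the re-indexed convolution terms involve cross-covariances at lags $l-i$, which are negative for $i>l$ and hence not literally covered by Assumption~\ref{ass:cross_covariance_zero} as stated for $l\in\No$.
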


\begin{rem}
    This result may give the false impression that one can generally treat faults as noise. However, as a counterargument, the PO-MOESP \cite[Section 9.6.1]{verhaegen2007filtering} method cannot be used. PO-MOESP further requires that $\E[\yv(k-l)\vv(k)\tran] = \O$, which does not generally hold. E.g., if $\vv$ is not zero-mean and ($\Am,\Fm$) is controllable, then $\E[\yv\vv\tran] \neq \O$. %
\end{rem}
{\begin{rem}
		More generally, any identification method with consistency guarantees in the colored noise cased should also have consistency results as long as Assumptions \ref{ass:ergodic_input}--\ref{ass:cross_covariance_zero} hold, and past input is used as instrumental variables. See, e.g., \cite{soderstrom2002instrumental} for the analysis of consistency using instrumental variables.
	\end{rem}
}

\section{FAULT SYSTEM IDENTIFICATION}\label{sec:fault_id}

The next task is to identify $\Fm, \Gm,$ and reconstruct the fault signal $\vv$ and state trajectory $\xv$. The latter is not available from the identification of the previous section as it relies on \eqref{eq:system_fault_output}, thus recovering $\tilde{\xv}$. Hereafter, we assume that $\Am, \Bm, \Cm, \Dm$ are known; thus, $\Om$ and $\Tm$ are also available; and for analysis purposes we consider that noise is absent, i.e., $\Wm = \O.$ In this case, referring to \eqref{eq:data_equation}, denote $\Rm \coloneqq \Ym - \Tm\Um$ as the Hankel matrix of the fault residual, and we get
\begin{equation}\label{eq:blind_data}
    \Rm_{s,T} = \Om_{s}\Xm_{T} + \Tm^f_{s}\Vm_{s,T},
\end{equation}
where the unknowns are $\Xm_{T}$, $\Tm^f_{s}$ and $\Vm_{s,T}$, or, referring to the state-space equation  \eqref{eq:residual_system}, to obtain the matrices $\Fm, \Gm,$ the initial state $\xiv_0$ and the trajectory $\vv$.
The solution to this problem is clearly non-unique. E.g., if $(\xiv_0, \Fm, \Gm, \vv)$ is a solution to the estimation problem, so is $(\xiv_0, \Fm\Jm, \Gm\Jm, \Jm^{-1}\vv)$ for any invertible matrix $\Jm \in \R^{n_f \times n_f}$.  As we shall see, the presence of transmission zeros from fault to output further increases the set of possible solutions.

\extra{
Let us now investigate possible approaches to identifying the fault matrices and signals.

\subsection{Rank minimization approach}\label{ssec:rankmin}
Consider the data equation with $s=T$
\[\Rm_{T,T} = \Om_T\xiv_0 + \Tm^f_T\Vm_{T,T}.\]
Let $\Tm_x$ be the Toeplitz matrix associated with the system $(\Am,\m{\I & \O},\Cm,\m{\O & \I}),$ for which a consistent estimator is available. We have that
\[\Tm^f\Vm_{T,T} = \Tm_x \m{\m{\Fm \\ \Gm}\vv(1) \\ \vdots \\ \m{\Fm \\ \Gm}\vv(T)} \eqqcolon \Tm_x\Zm_{T,T}\]
and, as such
\[\Zm_{1,T} = \m{\Fm \\ \Gm}\Vm_{1,T}.\]
By assumption of observability, $\Tm_x$ is always full row rank. Hence, if no constraint is made on the dimension of the unknown input space $\nv$, a trivial solution such as $\bsm{\Fm \\ \Gm} = \I$ is automatic. Moreover, if there exists a solution to \eqref{eq:blind_data} with a given $\nv$, then there is a solution with unknown input dimension $\nv+1$. Therefore, a natural requirement of minimality is given on the dimension of the input space; in other words, we want to find $\Fm,\Gm$ with the least number of columns. Because $\rank(\Wm_1) \leq \nv$, we can achieve that by the following rank minimization problem:
\begin{equation}\label{eq:rankmin}
\begin{aligned}
    \min_{\zv, \xiv_0}\quad & \rank(\Zm_1) \\
    \text{subject to}\quad & \Rm_{T,T} = \Om_T\xiv_0 + \Tm_x\Zm_{T,T}.
\end{aligned}
\end{equation}
The solution to \eqref{eq:rankmin} is the minimal input-space dimension that explains $\rv$. Next, take $\xiv_0, \zv$ as the minimizers of  \eqref{eq:rankmin}, and let $\Um_Z\Sigma_Z\Vm_Z\tran = \Zm_{1,T} $ be the singular value decomposition of $\Zm_{1,T}$. Set $\bsm{\Fm \\ \Gm} = \Um_Z$ and $\Vm_{1,T} = \Sigma_Z\Vm_Z\tran$, and build the matrices $\Tm^f_T$ and $\Vm_{T,T}$.  Then, $\Rm_{T,T} = \Om\xiv_0 + \Tm^f_T\Vm_{T,T}$.

Since rank minimization problems are generally NP-hard  \cite{recht2010guaranteed}, we convexify \eqref{eq:rankmin} by replacing rank with the nuclear norm, which in some conditions may yield a rank-minimal solution and has been applied for different problems in blind identification \cite{noom2023data, noom2024proximal}. 
The relaxed problem is
\begin{equation}\label{eq:nucnorm}
\begin{aligned}
    \min_{\wv, \xiv_0}\quad & \|\Wm_1\|_* \\
    \text{subject to}\quad & \Rm_{T} = \Om_T\xiv_0 + \Tm_x\wv.
\end{aligned}
\end{equation}
The relaxation above is not guaranteed to provide a minimal rank solution, thus we investigate whether it can be solved exactly by other means.
}

\subsection{A notion of output behavioral equivalence}\label{ssec:beh}

System \eqref{eq:residual_system} can be described by
\[ \Hm(\q)\m{\xiv \\ \vv} = \Lm\rv, \quad \Hm(\q) \coloneqq \m{\Am - \q\I & \Fm \\ \Cm & \Gm}, \quad \Lm = \m{\O \\ \I},\]
where $\q$ is the left-shift operator. %
Using a behavioral set description \cite{willems1991paradigms} helps to reveal what can be distinguished from the unknown input set given the residual and the available matrices $\Am$ and $\Cm.$  We call the \emph{output behavior set} $\Bs$ of a system with unknown initial state and input the set of all trajectories $\rv$ for which there exist initial state $\xiv_0$ and input trajectory $\vv$ that can generate $\rv$. The follow descriptions are equivalent:
\begin{subequations}\label{eq:behavior}
    \begin{align}
    \Bs = \left\{\rv: \N \to \R^m\ \mid \exists \xiv, \vv: \Hm(\q)\m{\xiv \\ \vv} = \Lm\rv \right\} \label{eq:behavior_state}\\
    = \{\rv: \N \to \R^m\ \mid \Nm(\q)\rv = \O \}, \label{eq:behavior_null}
\end{align}
\end{subequations}
where the latter is known as the kernel representation, and $\Nm(\q) = \Hm(\q)\kernel\Lm,$ where $\Hm(\q)\kernel$ is a minimal polynomial basis for the left null space of $\Hm(\q)$. Then kernel representation reveals the following fact:

\begin{prop}\label{prop:trivial_behavior} If Rosenbrock matrix $\Hm(\q)$ is full row rank, the behavior set $\Bs$ spans the whole space $\N \to \R^{n_y}.$    
\end{prop}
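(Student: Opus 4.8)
The plan is to read the result directly off the kernel representation \eqref{eq:behavior_null}, which turns the whole claim into a rank statement about the polynomial matrix $\Hm(\q)$. By the equivalence in \eqref{eq:behavior}, $\Bs = \{\rv : \Nm(\q)\rv = \O\}$ with $\Nm(\q) = \Hm(\q)\kernel\Lm$, where $\Hm(\q)\kernel$ is a minimal polynomial basis of the left null space of $\Hm(\q) \in \R[\cdot]^{(n_x+n_y)\times(n_x+n_v)}$. Hence it suffices to identify that left null space under the hypothesis, and then check what constraint, if any, $\Nm(\q)$ imposes on $\rv$.

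First I would make explicit what \emph{full row rank} means for the Rosenbrock matrix: its normal rank (in the sense defined in the preliminaries) equals the number of rows $n_x + n_y$, i.e., the rows of $\Hm(\q)$ are linearly independent over the field $\R(\q)$. Consequently the left null space $\{\bm{\pi}(\q) \in \R(\q)^{1\times(n_x+n_y)} : \bm{\pi}(\q)\Hm(\q) = \O\}$ is the trivial subspace, so any minimal polynomial basis of it is empty; that is, $\Hm(\q)\kernel$ is the matrix with $0$ rows and $n_x+n_y$ columns, and therefore $\Nm(\q) = \Hm(\q)\kernel\Lm$ has $0$ rows as well. The kernel equation $\Nm(\q)\rv = \O$ is then an empty system of constraints, satisfied by every $\rv : \N \to \R^{n_y}$, whence $\Bs = (\R^{n_y})^{\N}$, which is the claim. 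As a sanity check I would also note, via a Schur complement on the generically invertible block $\Am - \q\I$, that full row rank of $\Hm(\q)$ is equivalent to right-invertibility over $\R(\q)$ of the fault-to-residual transfer matrix $\Tm^f(\q) = \Gm + \Cm(\q\I - \Am)^{-1}\Fm$; this is the intuitive reason any residual trajectory can be produced, and it yields an alternative, constructive proof: from the Smith form of $\Hm(\q)$ one gets a polynomial right inverse up to a scalar, $\Hm(\q)\Mm(\q) = d(\q)\I$, and then for any $\rv$ one solves the difference equation $d(\q)\bm\zeta = \Lm\rv$ over $\N$ (free initial conditions) and sets $\bigl[\begin{smallmatrix}\xiv \\ \vv\end{smallmatrix}\bigr] = \Mm(\q)\bm\zeta$, which satisfies \eqref{eq:residual_system} and reproduces $\rv$.

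The argument involves essentially no computation, so there is no real obstacle; the only point I would be careful about is the degenerate case of the kernel-representation machinery, namely that a minimal polynomial basis of the trivial subspace is genuinely empty and that the equivalence \eqref{eq:behavior} specializes correctly there (i.e., it does not tacitly presuppose that $\Hm(\q)\kernel$ has at least one row). If one prefers not to rely on the general equivalence in this edge case, the constructive route sketched above proves $\rv \in \Bs$ from \eqref{eq:behavior_state} directly, the step needing the most care being the verification that $1/d(\q)$ is realizable as a forward recursion on $\N$ so that $\bm\zeta$, and hence the compatible pair $(\xiv,\vv)$, indeed exists.
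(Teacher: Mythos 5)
Your argument is correct and is essentially the paper's own (the paper gives no explicit proof, stating only that "the kernel representation reveals" the fact): full normal row rank of $\Hm(\q)$ makes its left null space trivial, so $\Nm(\q)$ has no rows and the kernel description \eqref{eq:behavior_null} imposes no constraint on $\rv$. Your care with the degenerate empty-basis case, and the constructive alternative via right-invertibility of the fault-to-residual transfer matrix, are both sound but not needed beyond what the paper intends.
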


Now, suppose that $\ny=\nv$ so that $\Hm(\q)$ is square. If $\Hm(\q)$ has full normal rank, Prop.~\ref{prop:trivial_behavior} tells that the system's output behavior is full and there is not much to retrieve from it. In fact, the matrices $\Fm = \O$ and $\Gm = \I$ are a valid solution to the identification problem, with corresponding fault signal satisfying $\vv = \rv - \yv.$ Thus, we proceed this note with the tacit assumption that $\ny > \nv$, for which nontrivial results can be retrieved. In general, we are concerned with find matrices $\Fm', \Gm'$ for which the same output behavior as $\Bs$ can be generated. Following similar definitions from computer science and hybrid systems theory, see, e.g., \cite{baier2008principles, tabuada2009verification}:
\begin{defn}[Output behavior equivalence]
	Two systems with corresponding output behaviors $\Bs$ and $\Bs'$ are said to be \emph{output behaviorally equivalent} if $\Bs = \Bs'$.
\end{defn}

The behavior restricted from time $0$ to time $s$ is denoted by $\Bs|_0^s$.  With an abuse of notation, we take an element of $\Bs|_0^s$ as the column vector obtained by stacking $\rv(0), \rv(1), ...$ vertically. Clearly, $\Bs|_0^s = \Rs\left(\m{\Om_s & \Tm_s}\right).$ What is less clear but important is the following fact:
\begin{lem}\label{lem:behavioral_extension}
	Consider a system $(\Am,\Fm,\Cm,\Gm)$ of order $\nx$. Assume $s\geq \nx$ and let $\rv_{0:s} \coloneqq \m{\rv(0)\tran & \rv(1)\tran & \cdots & \rv(s)\tran}{}\tran \in \Bs|_0^s$. Then, $\rv_{0:s+1} \in \Bs|_0^{s+1}$ if and only if $\rv_{1:s+1} \in \Bs|_0^{s}$.
\end{lem}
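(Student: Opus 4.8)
The statement is a ``one-step extension'' property: local compatibility of overlapping windows of length $s$ suffices to extend a window of length $s+1$ by one more sample. The natural approach is via the state-space description \eqref{eq:residual_system}. I would first establish the following characterization: a vector $\rv_{0:s}$ lies in $\Bs|_0^s$ if and only if there exist a state $\xiv_0$ and a fault segment $\vv(0),\dots,\vv(s)$ with $\rv(k)=\Cm\xiv(k)+\Gm\vv(k)$ and $\xiv(k+1)=\Am\xiv(k)+\Fm\vv(k)$ for $0\le k\le s$. The key structural observation is that, since the pair $(\Am,\Cm)$ is observable and $s\ge\nx$, the observability matrix $\Om_{s+1}$ (equivalently $\Om_s$) has full column rank $\nx$; hence the state $\xiv_0$ generating a given $\rv_{0:s}$ is \emph{not} free but is determined (together with the $\vv$'s) in a rigid way by the data-equation structure $\rv_{0:s}=\Om_{s+1}\xiv_0+\Tm^f_{s+1}\vv_{0:s}$.

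The ``only if'' direction is trivial: if $\rv_{0:s+1}\in\Bs|_0^{s+1}$, then its subtrajectory $\rv_{0:s}$ is generated by the same $(\xiv_0,\vv)$, and by time-invariance the shifted trajectory $\rv_{1:s+1}$ is generated by the shifted initial state $\xiv_1=\Am\xiv_0+\Fm\vv(0)$ and the shifted fault, so $\rv_{1:s+1}\in\Bs|_0^s$. For the ``if'' direction, suppose we are given $\rv_{0:s}\in\Bs|_0^s$ and also $\rv_{1:s+1}\in\Bs|_0^s$. From the first, pick $(\xiv_0,\vv(0),\dots,\vv(s))$ generating $\rv_{0:s}$; let $\xiv_j$ denote the resulting state trajectory. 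From the second, pick $(\zetav_1,\upsilonv(1),\dots,\upsilonv(s+1))$ generating $\rv_{1:s+1}$, with state trajectory $\zetav_j$. I want to splice these: use $(\xiv_0,\vv(0),\dots,\vv(s))$ on the window $0,\dots,s$ and then append one step using the second solution. The obstacle is that the state reached by the first solution at time $s+1$, namely $\xiv_{s+1}=\Am\xiv_s+\Fm\vv(s)$, need not equal the initial state $\zetav_1$ of the second solution; so I cannot just concatenate the $\vv$'s. Instead I argue as follows: both $\xiv_1$ and $\zetav_1$ produce, via their respective faults, the \emph{same} output window $\rv_{1:s+1}$ of length $s+1\ge\nx+1$. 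Consider the difference trajectory $\etav_j\coloneqq\xiv_j-\zetav_j$ on $1\le j\le s+1$ and $\deltav(j)\coloneqq\vv(j)-\upsilonv(j)$; it satisfies $\etav_{j+1}=\Am\etav_j+\Fm\deltav(j)$ and $\O=\Cm\etav_j+\Gm\deltav(j)$, i.e. $(\etav,\deltav)$ is a trajectory of the \emph{autonomous} (zero-output) behavior. In particular $(\zetav_1,\upsilonv(1),\dots,\upsilonv(s+1))$ and $(\xiv_1,\vv(1)-\deltav(1),\dots)$ — more precisely, $(\xiv_1,\vv(1),\dots,\vv(s),\vv(s+1)\coloneqq\upsilonv(s+1)+\deltav(s+1))$ where $\deltav$ is propagated from $\deltav(1)=\upsilonv(1)-\vv(1)$ through $\etav_{j+1}=\Am\etav_j+\Fm\deltav(j)$ — generate the same $\rv_{1:s+1}$. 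Hence replacing the second solution's initial state by $\xiv_1$ and its faults by $\vv(1),\dots,\vv(s),\vv(s+1)$ still reproduces $\rv_{1:s+1}$; but now the spliced tuple $(\xiv_0,\vv(0),\vv(1),\dots,\vv(s+1))$ is a genuine single trajectory of \eqref{eq:residual_system} whose output is exactly $\rv_{0:s+1}$, proving $\rv_{0:s+1}\in\Bs|_0^{s+1}$.

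The step I expect to be the crux is the splicing argument above — showing that the fault segment chosen to explain $\rv_{1:s+1}$ can be re-chosen to be compatible with the state that the first segment leaves us in at time $1$. The clean way to see this is the dimension count: since $s\ge\nx$, the map from (initial state, fault segment of length $s+1$) to output window $\rv_{1:s+1}$ has the property that its restriction to the fiber over a fixed output is an affine space, and the projection of that fiber onto the initial-state coordinate is a coset of the unobservable subspace of $(\Am,\Cm)$ \emph{relative to the $\Fm,\Gm$ dynamics} — which, precisely because we only need one more sample beyond length $\nx$, contains every state that the first segment could have reached. I would phrase this via the data equation: $\rv_{1:s+1}=\Om_{s+1}\xiv_1+\Tm^f_{s+1}\vv_{1:s+1}$ and $\rv_{1:s+1}=\Om_{s+1}\zetav_1+\Tm^f_{s+1}\upsilonv_{1:s+1}$ give $\Om_{s+1}(\xiv_1-\zetav_1)=\Tm^f_{s+1}(\upsilonv_{1:s+1}-\vv_{1:s+1})$, and then solve for the corrected fault segment, using that the left-hand side lies in $\Rs(\Om_{s+1})\cap\Rs(\Tm^f_{s+1})$ by construction. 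Once that compatible fault segment is in hand, concatenation with the first segment is immediate and the ``if'' direction follows.
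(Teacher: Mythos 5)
Your forward direction is fine and coincides with the paper's. The backward direction, however, has a genuine gap at exactly the point you flag as the crux. The difference trajectory $(\etav_j,\deltav(j))$ you construct is output-nulling only for $j=1,\dots,s$: the first solution's outputs are constrained only up to time $s$, so the identity $\O=\Cm\etav_j+\Gm\deltav(j)$ is available on $s$ samples, not $s+1$, and $\deltav(s+1)$ does not yet exist --- it cannot be ``propagated'' from $\deltav(1)$ through the state recursion, which determines $\etav_{s+1}$ but says nothing about whether $\Cm\etav_{s+1}\in\Rs(\Gm)$, which is precisely what you need in order to define a compatible $\vv(s+1)$. Likewise, the closing identity $\Om_{s+1}(\xiv_1-\zetav_1)=\Tm^f_{s+1}(\upsilonv_{1:s+1}-\vv_{1:s+1})$ presupposes a segment $\vv_{1:s+1}$ reproducing $\rv_{1:s+1}$ from $\xiv_1$, which is the statement to be proved. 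So the one nontrivial step --- that an output-nulling trajectory of length $s\ge\nx$ with a \emph{prescribed} fault segment admits a one-step output-nulling extension --- is asserted rather than established, and the hypothesis $s\ge\nx$ is invoked but never actually does any work. (A side remark: the claim that full column rank of the observability matrix pins down $\xiv_0$ is also not right here; with the fault channel present the initial state is determined only up to the weakly unobservable subspace.)

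The gap is fixable along the lines you hint at in your last paragraph, but the mechanism must be made explicit. Let $\mathcal{V}^k$ denote the set of initial states from which the zero output can be produced over $k$ samples; by linearity the set of states compatible with a fixed window of $k$ output samples is either empty or a coset of $\mathcal{V}^k$. The chain $\mathcal{V}^0\supseteq\mathcal{V}^1\supseteq\cdots$ satisfies $\mathcal{V}^{k+1}=\mathcal{V}^k\Rightarrow\mathcal{V}^{k+2}=\mathcal{V}^{k+1}$ and hence stabilizes by step $\nx$; this is where $s\ge\nx$ enters. The set of states compatible with $\rv_{1:s+1}$ ($s+1$ samples) is then a nonempty coset of $\mathcal{V}^{s+1}=\mathcal{V}^{s}$ contained in the set of states compatible with $\rv_{1:s}$ ($s$ samples), itself a coset of $\mathcal{V}^s$; nested nonempty cosets of the same subspace coincide, so $\xiv_1=\Am\xiv_0+\Fm\vv(0)$, which is compatible with $\rv_{1:s}$, is also compatible with $\rv_{1:s+1}$ for \emph{some} fault segment, and prepending $(\xiv_0,\vv(0))$ finishes the proof. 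Note that this repaired argument is a genuinely different route from the paper's, which never leaves the range description $\Bs|_0^s=\Rs\left(\m{\Om_s & \Tm_s}\right)$ and instead shows, via Cayley--Hamilton and the stabilization of $\Rs(\Am^k)$, that the new block row of $\m{\Om_{s+1} & \Tm_{s+1}}$ contributes no column space beyond that of the shifted window.
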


\begin{proof}
	{\setlength\arraycolsep{3pt}
	To prove the forward implication, take $\xiv_0, \vv_{0:s+1}$ such that $\rv_{0:s+1} = \m{\Om_{s+1} & \Tm_{s+1}}\bsm{\xiv_0 \\ \vv_{0:s+1}}.$ Let $\xiv_1 \coloneqq \Am\xiv_0 + \Bm\vv_0.$ Then, $\rv_{1:s+1} = \m{\Om_{s} & \Tm_{s}}\bsm{\xiv_1 \\ \vv_{1:s+1}}$.
	To prove the backward implication, assume that $\rv_{0:s+1} \notin \Rs\left(\m{\Om_{s+1} & \Tm_{s+1}}\right)$. Because
	\[ \m{\Om_{s+1} & \Tm_{s+1}} = \mm{6}{\Om_s & \vline & & \Tm_s & & \O \\ \hline \Cm\Am^{s+1} & \vline & \Cm\Am^s\Fm & \cdots & \Cm\Fm & \Gm},\]
	then $\rv_{0:s} \in \Rs\left(\m{\Om_s & \Tm_s}\right)$ implies that $\rv_{s+1} \notin \Rs\left(\m{\Cm\Am^{s+1} & \vline & \Cm\Am^s\Fm & \cdots & \Cm\Fm & \Gm}\right).$ However, we have by Lemma \ref{lem:stability_ranges} (see Appendix) that $\Rs\left(\Cm\Am^{s+1}\right) = \Rs\left(\Cm\Am^{s}\right)$. In addition, by Cayley--Hamilton, $s\geq \nx$ implies that $\Am^s$ is a linear combination of lower powers of $\Am$. As such, the column $\Cm\Am^s\Fm$ does not add to the column space of 
		\(\Rs\left(\m{\Cm\Am^s\Fm & \cdots & \Cm\Fm & \Gm}\right) = \Rs\left(\m{\Cm\Am^{s-1}\Fm & \cdots & \Cm\Fm & \Gm}\right).\)
	Therefore, $\rv_{s+1} \notin \Rs\left(\m{\Cm\Am^{s} & \vline & \Cm\Am^{s-1}\Fm & \cdots & \Cm\Fm & \Gm}\right),$ and $\rv_{1:s+1} \notin \Bs|_0^{s}$.
	}
\end{proof}

By induction on Lemma \ref{lem:behavioral_extension}, we have that $\Bs|_0^{\nx}$ fully characterizes the output behavior of a system:
\begin{thm}\label{thm:behavioral_equivalence}
	Consider two systems $(\Am,\Fm,\Cm,\Gm)$ and  $(\Am',\Fm',\Cm',\Gm')$ of order $\nx$, with behaviors $\Bs$ and $\Bs'$, respectively. If $\Bs|_0^\nx = \Bs'|_0^\nx,$ then they are output behaviorally equivalent.
\end{thm}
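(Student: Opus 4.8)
The plan is to reduce Theorem~\ref{thm:behavioral_equivalence} to an induction that is driven entirely by Lemma~\ref{lem:behavioral_extension}. The claim is that the finite slice $\Bs|_0^\nx$ (equivalently, the column space of $\m{\Om_\nx & \Tm_\nx}$) determines the \emph{entire} infinite-horizon behavior $\Bs$. So it suffices to show, for each $s \geq \nx$, that $\Bs|_0^s$ is uniquely determined by $\Bs|_0^\nx$; once we have this for every $s$, two systems agreeing on $\Bs|_0^\nx$ agree on $\Bs|_0^s$ for all $s$, hence (since a trajectory $\rv \colon \N \to \R^\ny$ lies in $\Bs$ iff every finite truncation lies in $\Bs|_0^s$) they have the same behavior set.

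The induction step is exactly where Lemma~\ref{lem:behavioral_extension} does the work. Suppose inductively that $\Bs|_0^s = \Bs'|_0^s$ for some $s \geq \nx$. I would characterize $\Bs|_0^{s+1}$ as
\[ \Bs|_0^{s+1} = \left\{ \rv_{0:s+1} : \rv_{0:s} \in \Bs|_0^s \ \text{and}\ \rv_{1:s+1} \in \Bs|_0^s \right\}, \]
which is precisely the content of Lemma~\ref{lem:behavioral_extension} (the forward implication gives $\subseteq$, the backward implication gives $\supseteq$). The right-hand side is expressed purely in terms of $\Bs|_0^s$, so if $\Bs|_0^s = \Bs'|_0^s$ then $\Bs|_0^{s+1} = \Bs'|_0^{s+1}$. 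This closes the induction and yields $\Bs|_0^s = \Bs'|_0^s$ for all $s \geq \nx$; combined with the hypothesis $\Bs|_0^\nx = \Bs'|_0^\nx$ it holds for all $s \geq 0$ (shorter slices are obtained by projection, i.e.\ dropping the tail coordinates, so equality of longer slices forces equality of shorter ones).

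Finally I would argue the passage from all finite slices to the infinite behavior. Given $\rv \colon \N \to \R^\ny$ with $\rv \in \Bs$, every truncation $\rv_{0:s}$ is in $\Bs|_0^s = \Bs'|_0^s$, so for each $s$ there is a state/input pair generating it in the primed system. To produce a single infinite trajectory, note that by Lemma~\ref{lem:behavioral_extension} (applied to the primed system) the compatible initial segments are nested consistently; more directly, one can appeal to the kernel representation~\eqref{eq:behavior_null}: $\rv \in \Bs$ iff $\Nm(\q)\rv = \O$, and $\Nm(\q)$ has degree at most $\nx$, so membership is decided by finitely many consecutive-sample constraints, all of which are captured by $\Bs|_0^\nx$. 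Either way, $\Bs = \Bs'$.

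The main obstacle I anticipate is not the algebra but being careful about the quantifier structure in the infinite-to-finite reduction: one must ensure that a coherent family of finite-horizon witnesses actually assembles into a genuine infinite trajectory of the primed system, rather than merely a sequence of unrelated finite witnesses. This is where invoking the kernel (polynomial-null-space) representation is cleanest, since it turns ``$\rv \in \Bs$'' into a shift-invariant local condition of memory $\nx$, making the equivalence $\Bs|_0^\nx = \Bs'|_0^\nx \Rightarrow \Bs = \Bs'$ essentially immediate; the behavioral-extension lemma is then the bridge showing that this finite slice indeed pins down $\Nm(\q)$ up to the relevant equivalence. A secondary minor point to check is the base-case bookkeeping ($s = \nx$ is given, and one should confirm $s \geq \nx$ is exactly the hypothesis needed to invoke Lemma~\ref{lem:behavioral_extension} at every step of the induction, which it is).
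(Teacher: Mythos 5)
Your proposal is correct and is essentially the argument the paper intends: the paper gives no separate proof beyond the remark ``by induction on Lemma~\ref{lem:behavioral_extension},'' and your induction step (characterizing $\Bs|_0^{s+1}$ via $\Bs|_0^s$ using both directions of the lemma) is exactly that induction, spelled out. Your additional care in passing from equality of all finite slices to equality of the infinite behaviors (via the finite-memory kernel representation, or a stabilizing chain of affine witness sets) fills a step the paper leaves implicit, and is handled correctly.
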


\subsection{Exact approach for fault system identification}\label{ssec:fault_id}

We start by invoking the definition of left invertibility:
\begin{defn}[Left-invertibility~\cite{kirtikar2011delay}]
	The system $(\Am,\Bm,$ $\Cm,\Dm)$ with transfer function $\mathbf{G}(\z):=\Cm(\z\I-\Am)^{-1}\Bm+\Dm$ is said to be \textit{$l$-delay left invertible} if there exists an $\nup \times \ny$ proper transfer function $\mathbf{G}^{\dagger}(\z)$ such that $\mathbf{G}^{\dagger}(\z)\mathbf{G}(\z)=\z^{-l}\I$ for almost all $\z\in\mathbb{C}$. If one such $l\geq0$ exists, the system is simply called \emph{left invertible}.
\end{defn}

Let us rewrite \eqref{eq:blind_data} as
\begin{equation}\label{eq:qr_decomp}
	 \Rm_s = \m{\Om_s & \Tm_f}\m{\Xm_s \\ \Vm_s} = \Qm\Zm,
\end{equation}
where $\Qm$ is an orthonormal basis for the column space of $\Rm_s$ (obtained through, e.g., QR factorization.) If $s$ is sufficiently large but still $\Rm_s$ has more columns than rows, then under the assumption that $\Vm_s$ is p.e.~of sufficiently high order, the rank of $\Rm_s$ is equal to the rank of $\m{\Om_s & \Tm_f}$. As recently proved in \cite{sanjeevini2020counting}, the rank of the latter matrix is reduced by the presence of zeros in the system. More precisely, for left invertible systems, the rank deficiency of $\m{\Om_s & \Tm_f}$ is precisely the number of transmission zeros: %
\begin{thm}[Adapted from \cite{sanjeevini2020counting}]\label{thm:rankzeros} Let $l$ be the smallest integer such that the minimal system $(\Am,\Bm,\Cm,\Dm)$ is $l$-delay left invertible, and $\zeta$ be the number of finite and infinite transmission zeros counting multiplicity. Then, if $s \geq \max(l,\nx),$ it holds that $\rank{\m{\Om_s & \Tm_s}}= \nx+s\nup - \zeta.$
\end{thm}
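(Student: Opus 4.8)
The statement is essentially the rank count for the extended observability/Toeplitz matrix $\m{\Om_s & \Tm_s}$ of a left-invertible system, and the intended route is to reduce it to the result of \cite{sanjeevini2020counting}, which counts the rank deficiency of such a matrix in terms of transmission zeros. The plan is as follows. First I would make precise the correspondence: the matrix $\m{\Om_s & \Tm_s}$ is exactly the map from $(\xiv_0,\vv_{0:s-1})$ to $\rv_{0:s-1}$ for the system $(\Am,\Fm,\Cm,\Gm)$ — but since the theorem is stated for a generic left-invertible minimal system $(\Am,\Bm,\Cm,\Dm)$, I would just work with $(\Am,\Bm,\Cm,\Dm)$ directly. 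The free-variable dimension is $\nx + s\nup$, so the claim is equivalent to saying the left nullspace of $\m{\Om_s & \Tm_s}$ has dimension exactly $\zeta$ once $s \geq \max(l,\nx)$.

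Second, I would invoke left invertibility to pin down the \emph{generic} (i.e.\ maximal-rank) behavior: for $s \geq l$, the $l$-delay left inverse $\mathbf{G}^\dagger$ guarantees that from a sufficiently long output window one can reconstruct the input up to the delay, which forces the map to be "as injective as possible" modulo the zero dynamics. Concretely, I expect the key algebraic fact, borrowed essentially verbatim from \cite{sanjeevini2020counting}, to be that for $s$ large enough the rank of $\m{\Om_s & \Tm_s}$ stabilizes and its deficiency relative to $\nx + s\nup$ equals the number of invariant/transmission zeros, where infinite zeros are exactly the ones accounting for the left-invertibility delay $l$ and the finite zeros are the roots of the invariant polynomials of the Rosenbrock pencil $\m{\Am-\q\I & \Bm \\ \Cm & \Dm}$. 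The threshold $s \geq \max(l,\nx)$ is what makes both the "enough rows to see the delay" ($s\geq l$) and the "enough depth for Cayley--Hamilton / reachability-observability to saturate" ($s\geq\nx$) conditions hold simultaneously; I would cite Lemma \ref{lem:behavioral_extension} / Theorem \ref{thm:behavioral_equivalence} here to argue that beyond $s = \nx$ the column space of $\m{\Om_s & \Tm_s}$ grows only by the trivially-appended directions, so the rank increment per extra step is exactly $\ny$ minus nothing new, which combined with the per-step count $\nup$ of fresh input variables yields a constant deficiency.

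Third, I would assemble the count: writing $r_s := \rank\m{\Om_s & \Tm_s}$, one shows $r_{s+1} - r_s = \nup$ for all $s \geq \max(l,\nx)$ by the left-invertibility argument (each new input sample is recoverable, hence contributes a full $\nup$ new dimensions to the range), while at the base level $s_0 = \max(l,\nx)$ one has $r_{s_0} = \nx + s_0\nup - \zeta$ by the structural zero-count of \cite{sanjeevini2020counting}; telescoping gives $r_s = \nx + s\nup - \zeta$ for all $s \geq s_0$. The main obstacle, and the step I would be most careful about, is the honest justification of the base case — i.e.\ that the rank deficiency at $s = \max(l,\nx)$ is \emph{exactly} $\zeta$ (finite plus infinite zeros with multiplicity), not merely bounded by it. This is the technical heart of \cite{sanjeevini2020counting} and hinges on a Smith--McMillan / Rosenbrock-pencil argument relating the null directions of the pencil at the zeros to null directions of the block-Toeplitz matrix; since the theorem is explicitly "adapted from" that reference, I would state this as the imported fact and devote the proof here mainly to (a) verifying the hypotheses of that reference transfer (minimality, left invertibility), (b) the telescoping argument, and (c) checking that the threshold $\max(l,\nx)$ is the correct one for our block-Toeplitz matrix rather than the possibly different threshold used there, using Lemma \ref{lem:stability_ranges} and Cayley--Hamilton exactly as in the proof of Lemma \ref{lem:behavioral_extension}.
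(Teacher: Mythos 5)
Your plan is workable in the sense that, like the paper, it ultimately defers the technical core to \cite{sanjeevini2020counting}, but the route is genuinely different. The paper's proof is a one-line combination of two separate rank identities from that reference: its Theorem~III.4, which ties $\rank\m{\Om_s & \Tm_s}$ to $\rank(\Tm_s)$ through the number of \emph{finite} zeros via $\nx - \rank\m{\Om_s & \Tm_s} + \rank(\Tm_s) = \zeta_f$, and its Theorem~IV.8, which evaluates $\rank(\Tm_s)$ itself as $s\nup$ minus a zero count; no induction on $s$ is needed because both identities already hold for every $s$ above the threshold. You instead import the full statement at the single base value $s_0=\max(l,\nx)$ and telescope upward with $r_{s+1}-r_s=\nup$. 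That structure is logically admissible, but it buys less than it appears to: the base case you import \emph{is} the theorem at $s=s_0$, so the actual content being adapted --- the split between finite zeros (seen in the interaction of the observability and Toeplitz blocks) and infinite zeros (seen in $\rank(\Tm_s)$ alone) --- never surfaces. One concrete caveat on your telescoping step: when $l>0$, the \emph{newest} input samples in the window are precisely the ones that are \emph{not} recoverable from the output window --- that is what the delay and the infinite zeros encode, and why $\rank(\Tm_s)<s\nup$ --- so the increment of $\nup$ per step comes from an \emph{older} sample becoming recoverable as the window lengthens, not from the freshly appended one. As written, ``each new input sample is recoverable, hence contributes a full $\nup$ new dimensions'' is therefore not a correct justification and would have to be reworked (the Cayley--Hamilton/Lemma~\ref{lem:stability_ranges} argument you point to, as used in the proof of Lemma~\ref{lem:behavioral_extension}, is the right tool for showing the new block row adds nothing beyond the new $\Gm$ column). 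Everything else in your plan is consistent with the paper's intent; it is simply a longer path to the same citation.
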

\begin{proof}
    Let $\zeta_f$ be the number of finite zeros counting multiplicity and $\zeta_i$ be the number of infinite zeros. Theorem III.4 in \cite{sanjeevini2020counting} states that that, if $l \geq n_x$, then 
    \( \nx - \rank\left(\m{\Om_s & \Tm_s}\right) + \rank(\Tm_s) = \zeta_f. \)
    Theorem IV.8 in \cite{sanjeevini2020counting} gives that, if $s \geq l$, then $\rank(\Tm_s) = s\nup- \zeta.$ The desired result is obtained by combining these two expressions.
\end{proof}

In light of Theorem \ref{thm:rankzeros}, we shall assume the following:
\begin{assum}\label{assum:main}  
	The residual system \eqref{eq:residual_system} is $l$-delay left-invertible, $(\Am,\Cm)$ is observable, the matrix $\bsm{\Xm \\ \Vm_s}$ is full row rank, and  $s \geq \max(l,n_x-1).$
\end{assum}

The role of these assumptions is made clearer after the following result.

\begin{prop}\label{prop:rank_u}
    If Assumption \ref{assum:main} holds, then all of the following is true:
    \begin{enumerate}[(i)]
    	\item the column spaces of $\Rm_s$ and $\m{\Om_s & \Tm_s}$ are equal;
    	\item $\rank{\Rm_s} = n_x + s\nv - \zeta;$
    	\item $\nv = \rank{\Rm_{s+1}} - \rank{\Rm_s}.$
    \end{enumerate}
\end{prop}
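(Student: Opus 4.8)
The plan is to prove the three items in order: item~(i) is pure linear algebra, item~(ii) follows from (i) together with the zero-counting Theorem~\ref{thm:rankzeros}, and item~(iii) is then an immediate subtraction.

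For item~(i), I would start from the factorization $\Rm_s = \m{\Om_s & \Tm^f_s}\m{\Xm \\ \Vm_s}$ of \eqref{eq:blind_data}--\eqref{eq:qr_decomp}. The inclusion $\Rs(\Rm_s)\subseteq\Rs(\m{\Om_s & \Tm^f_s})$ is automatic. For the converse, Assumption~\ref{assum:main} makes $\m{\Xm \\ \Vm_s}$ full row rank, hence right invertible; right-multiplying the factorization by any right inverse gives $\m{\Om_s & \Tm^f_s} = \Rm_s\m{\Xm \\ \Vm_s}^{-R}$, so that $\Rs(\m{\Om_s & \Tm^f_s})\subseteq\Rs(\Rm_s)$ as well. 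Equivalently, postmultiplication by a full-row-rank matrix preserves the column space. Hence the two column spaces coincide, and in particular $\rank\Rm_s = \rank\m{\Om_s & \Tm^f_s}$.

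For item~(ii), by item~(i) it remains to evaluate $\rank\m{\Om_s & \Tm^f_s}$, the rank of the extended observability--Toeplitz matrix of the residual system \eqref{eq:residual_system}, whose input dimension is $\nv$. Assumption~\ref{assum:main} makes this system $l$-delay left invertible with $(\Am,\Cm)$ observable and $s\geq\max(l,\nx-1)$, so Theorem~\ref{thm:rankzeros}, with $\nup$ replaced by $\nv$, applies and gives $\rank\m{\Om_s & \Tm^f_s} = \nx + s\nv - \zeta$, where $\zeta$ is the total number of finite and infinite transmission zeros of $\Gm + \Cm(\q\I-\Am)^{-1}\Fm$. The step that needs care is that Theorem~\ref{thm:rankzeros} presumes a \emph{minimal} realization: observability is granted, but if $(\Am,\Fm)$ fails to be controllable one must argue that restricting to the controllable subspace leaves $\rank\m{\Om_s & \Tm^f_s}$ unchanged (equivalently, that uncontrollable modes are already absorbed into the count), and one should reconcile the admissible window lengths ($s\geq\max(l,\nx)$ in Theorem~\ref{thm:rankzeros} versus $s\geq\max(l,\nx-1)$ in Assumption~\ref{assum:main}). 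I expect this bookkeeping to be the main, if modest, obstacle.

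Finally, for item~(iii), I would apply the formula of item~(ii) at window lengths $s$ and $s+1$, both of which lie in the admissible range since $s\geq\max(l,\nx-1)$ forces $s+1\geq\max(l,\nx)$. As $\zeta$ depends only on the residual transfer function and not on the window length, it cancels in the difference, leaving $\rank\Rm_{s+1}-\rank\Rm_s = \bigl(\nx+(s+1)\nv-\zeta\bigr)-\bigl(\nx+s\nv-\zeta\bigr)=\nv$, which is the claimed identity.
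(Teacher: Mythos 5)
Your proposal is correct and follows essentially the same route as the paper: (i) from right-invertibility of the full-row-rank matrix $\bsm{\Xm \\ \Vm_s}$, (ii) by plugging Theorem~\ref{thm:rankzeros} into (i), and (iii) by subtracting the two rank formulas. The minimality and window-length issues you flag (Theorem~\ref{thm:rankzeros} nominally needs $s\geq\max(l,\nx)$ while Assumption~\ref{assum:main} only grants $s\geq\max(l,\nx-1)$) are genuine loose ends, but the paper's own proof passes over them silently as well, so you are not missing anything it supplies.
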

\begin{proof}
    Result (i) is trivial from \eqref{eq:qr_decomp} since  $\bsm{\Xm \\ \Vm_s}$ is full row rank by assumption. Hence, $\rank{\Rm_s} =  \rank\m{\Om_s & \Tm_s}$. For (ii), Theorem \ref{thm:rankzeros} in (i) gives $\rank{\Rm_s} =  \rank\m{\Om_s & \Tm_s} = n_x + s\nv - \zeta.$ Finally, we use (ii) to get that  $\rank{\Rm_{s+1}} - \rank{\Rm_s} = n_x + (s+1)\nv - \zeta - n_x - s\nv + \zeta = \nv$, obtaining (iii).
\end{proof}

\begin{rem}[Left invertibility is general]
	The assumption that system \eqref{eq:residual_system} is left-invertible is compatible with the problem of recovering the minimal input dimension that explains data. Clearly, if the system is not left invertible, then there exists combinations of fewer inputs that can produce the same output. Hence, $\nv$ in Proposition \ref{prop:rank_u} is the minimal input dimension that explains the data.
\end{rem}

\begin{rem}[Choosing $s$]
	While $n_x$ is available from the identification of $\Am$, $l$ is a difficult number to know beforehand. A safe choice is to pick $s \geq n_x$, since, from \cite[Prop.~1]{kirtikar2011delay}, any left-invertible system is $n_x$-left invertible.
\end{rem}

\begin{rem}[Persistency of excitation by $\vv$]
	Assuming that $\bsm{\Xm \\ \Vm_s}$ is full rank is a combination of open-loop condition, which was already placed in Section \ref{sec:prob}, and a persistency of excitation assumption, which is relatively standard in subspace identification. %
	In open loop, $\Xm$ and $\Vm$ have generically disjoint row spaces. Since this is the residual subsystem, we can freely initialize $\xiv(0)$\altextra{, and a generic non-zero initial state renders $\Xm$ full row rank provided $\Am$ does not have eigenvalues with geometric multiplicity bigger than 1.}{.} Finally, because we have no control on the signal $\vv$, $\Vm$ being full rank may be regarded as an excessively stringent condition; however, if one is monitoring the system before the onset of the fault, the time window may be picked in such a way that it holds. \extra{E.g., consider a uni-dimensional fault $v$ starting at time $s-1$: the matrix \[\m{0 & \cdots & v(s-1) & \cdots \\ \vdots & \iddots  \\ v(s-1) & \cdots}\] is always full row rank.}
\end{rem}

Hereafter, we explain the procedure to recover the pair $(\Fm, \Gm)$  under Assumption \ref{assum:main}. From \eqref{eq:qr_decomp}, we get
\[  \Rm_s{\m{\Xm_s \\ \Vm_s}}^{-R} = \m{\Om_s & \Tm_f} = \Qm\Zm{\m{\Xm_s \\ \Vm_s}}^{-R} \coloneqq \Qm\tilde{\Zm}. \]
Thus, partitioning $\tilde{\Zm} = \m{\Zm_x & \Zm_u}$ where the first block has $\nx$ columns, we have that
\(\Tm_f = \Qm\Zm_u.\)
Our problem amounts to finding a full-column rank matrix $\Zm_u$ such that
\begin{equation}
   \Qm\Zm_u = \m{\Gm & \O & \O & \O& \cdots & \O \\
   	                                      \Cm\Fm & \Gm & \O  & \O & \cdots & \O \\
   	                                      \Cm\Am\Fm & \Cm\Fm & \Gm & \O & \cdots & \O \\
   	                                      \vdots & \vdots & & \ddots \\
   	                                      \Cm\Am^{s-2}\Fm & \Cm\Am^{s-3}\Fm & & \cdots & & \Gm}.
\end{equation}
Partitioning conformally $\Qm$ by $s$ block rows $\Qm_i$ and $\Zm_u$ by $s$ block columns $\Zm_j$, this equality can be rearranged as
\begin{equation}\label{eq:rearranged}
	\begin{aligned}
		\Qm_i\Zm_j &= \O,              &&  \text{for all } i,j \in \{1,2,...,s\}: j > i, \\
		\Qm_i\Zm_j &=\Qm_{i+1}\Zm_{j+1}, &&    \text{for all } i,j \in \{1,2,...,s\}: j \leq i ,\\
		\Om_{s-1}\hat{\Fm} &= \Qm_{2:s}\Zm_1.
	\end{aligned}
\end{equation}
Since all matrices multiplying on the left are given, \eqref{eq:rearranged} can be rewritten as the nullspace problem \(	\Mm\m{\Zm_1\tran & \cdots & \Zm_s\tran & \hat{\Fm}{}\tran}{}\tran = \O, \)
where $\Mm$ is built following the equations in \eqref{eq:rearranged}. By construction, $\hat{\Fm}$ contains the column space of $\Fm$, and $\hat{\Gm} = \Qm_1\Zm_1$ contains that of $\Gm$.  With standard linear algebraic arguments, we arrive at the main result of this note.
\begin{thm}\label{thm:FG}
	Let Assumption \ref{assum:main} hold, and further take $s \geq \nx$. Let $\Zm_j \in \R^{(\nx + \ny)\times \nz}$, $j \in \{1,2,...,s\}$ and $\hat{\Fm}  \in \R^{\nx\times \nz}$ be maximal in the sense that their columns form a basis for the solution of \eqref{eq:rearranged}, and let $\hat{\Gm} \coloneqq \Qm_1\Zm_1 \in \R^{\ny\times \nz}$. Then,
	\begin{enumerate}[(i)]
	\item There exists a matrix $\Pm \in \R^{\nz \times \nv}$ such that $\Fm = \hat{\Fm}\Pm$ and $\Gm = \hat{\Gm}\Pm$;
	\item For almost any full column-rank $\Pm \in \R^{\nz \times \nv}$ , the system $(\Am,\hat{\Fm}\Pm,\Cm,\hat{\Gm}\Pm)$ is output behaviorally equivalent to $(\Am,\Fm,\Cm,\Gm)$;
	\item If $(\Am,\Fm',\Cm,\Gm')$ is output behaviorally equivalent to $(\Am,\Fm,\Cm,\Gm)$, then there exists a matrix $\Pm'$ such that $\Fm' = \hat{\Fm}\Pm$ and $\Gm' = \hat{\Gm}\Pm$.
	\end{enumerate}
\end{thm}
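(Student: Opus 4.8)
The plan is to reduce all three claims to one ``dictionary'' between fault-structured block-Toeplitz matrices and solutions of the linear system \eqref{eq:rearranged}, plus a single genericity argument for (ii). Write $\Ts_s(\Fm',\Gm')$ for the block lower-triangular Toeplitz matrix with diagonal blocks $\Gm'$ and $k$-th subdiagonal blocks $\Cm\Am^{k-1}\Fm'$, so that $\Tm^f_s = \Ts_s(\Fm,\Gm)$ and a system $(\Am,\Fm',\Cm,\Gm')$ has $\Bs'|_0^s = \Rs\left(\m{\Om_s & \Ts_s(\Fm',\Gm')}\right)$. First I would record the consequences of \eqref{eq:qr_decomp}: since $\m{\Xm \\ \Vm_s}$ is full row rank (Assumption \ref{assum:main}), $\Rs(\Rm_s) = \Rs\left(\m{\Om_s & \Tm^f_s}\right) = \Bs|_0^s$, so $\Qm$ is an orthonormal basis of this subspace and $n_Q \coloneqq \rank\Rm_s = \dim\Bs|_0^s = \nx+s\nv-\zeta$ by Proposition \ref{prop:rank_u}; writing $\m{\Om_s & \Tm^f_s} = \Qm\tilde{\Zm}$ with $\tilde{\Zm}$ full row rank and peeling off its first $\nx$ columns yields $\Om_s = \Qm\Zm_x$ and $\Tm^f_s = \Qm\m{\Zm^\star_1 & \cdots & \Zm^\star_s}$. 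The dictionary, obtained by a block-by-block reading of \eqref{eq:rearranged}, is: $\Ts_s(\Fm',\Gm')$ has all its columns in $\Rs(\Qm)$ if and only if $\Ts_s(\Fm',\Gm') = \Qm\m{\Zm'_1 & \cdots & \Zm'_s}$ for some tuple $(\Zm'_1,\dots,\Zm'_s,\Fm')$ solving \eqref{eq:rearranged}, in which case $\Gm' = \Qm_1\Zm'_1$. One direction is immediate because the Toeplitz pattern of $\Ts_s(\Fm',\Gm')$, expanded blockwise as $\Qm_i\Zm'_j$, is literally the content of $\Qm_i\Zm'_j=\O$ for $j>i$, $\Qm_i\Zm'_j=\Qm_{i+1}\Zm'_{j+1}$ for $j\le i$, and $\Qm_{2:s}\Zm'_1=\Om_{s-1}\Fm'$; conversely, chaining the diagonal-constancy equations down to $\Qm_{i-j+1}\Zm'_1$ and substituting the last relation forces $\Qm_i\Zm'_j=\Cm\Am^{i-j-1}\Fm'$ for $i>j$ and $\Qm_i\Zm'_i=\Qm_1\Zm'_1$. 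This uses no injectivity of $\Om_{s-1}$: ``$\hat\Fm$'' is a genuine free variable of \eqref{eq:rearranged}, which is why the solution basis can be wider than $\nv$.

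Claims (i) and (iii) then follow at once. For (iii), output behavioral equivalence gives $\Rs\left(\m{\Om_s & \Ts_s(\Fm',\Gm')}\right) = \Bs'|_0^s = \Bs|_0^s = \Rs(\Qm)$, so the columns of $\Ts_s(\Fm',\Gm')$ lie in $\Rs(\Qm)$ and the dictionary produces a solution $(\Zm'_1,\dots,\Zm'_s,\Fm')$ of \eqref{eq:rearranged} with $\Gm'=\Qm_1\Zm'_1$; since the columns of $(\Zm_1,\dots,\Zm_s,\hat\Fm)$ form a basis of that solution space, there is a matrix $\Pm'$ with $\m{\Zm'_1 \\ \vdots \\ \Zm'_s \\ \Fm'} = \m{\Zm_1 \\ \vdots \\ \Zm_s \\ \hat\Fm}\Pm'$, whence $\Fm'=\hat\Fm\Pm'$ and $\Gm'=\Qm_1\Zm'_1=\Qm_1\Zm_1\Pm'=\hat\Gm\Pm'$. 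Claim (i) is the special case $(\Fm',\Gm')=(\Fm,\Gm)$, and the resulting $\Pm^\star$ has full column rank because left-invertibility of the residual system forces $\m{\Fm \\ \Gm}=\m{\hat\Fm \\ \hat\Gm}\Pm^\star$ to have full column rank $\nv$.

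For (ii), run the dictionary forwards: for any full column-rank $\Pm$ the tuple $(\Zm_1\Pm,\dots,\Zm_s\Pm,\hat\Fm\Pm)$ again solves the homogeneous system \eqref{eq:rearranged}, so $\Qm\m{\Zm_1\Pm & \cdots & \Zm_s\Pm}$ equals $\Ts_s(\hat\Fm\Pm,\hat\Gm\Pm)$ and $\m{\Om_s & \Ts_s(\hat\Fm\Pm,\hat\Gm\Pm)} = \Qm\m{\Zm_x & \Zm_1\Pm & \cdots & \Zm_s\Pm}$, whose column space lies in $\Rs(\Qm)=\Bs|_0^s$ with equality precisely when this matrix has rank $n_Q$. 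That rank is $n_Q$ at $\Pm=\Pm^\star$ --- there the matrix equals $\m{\Om_s & \Tm^f_s}$ --- and ``$\rank\ge n_Q$'' is a Zariski-open condition on $\Pm$ (non-vanishing of some $n_Q\times n_Q$ minor, a polynomial in $\Pm$), hence holds for all $\Pm$ off a measure-zero set. For such $\Pm$ the candidate system reproduces $\Bs|_0^s$, hence also $\Bs|_0^\nx$ (since $s\ge\nx$ and $\Bs|_0^\nx$ is the image of $\Bs|_0^s$ under truncation to $[0,\nx]$, extending the fault by zeros), and Theorem \ref{thm:behavioral_equivalence} gives output behavioral equivalence.

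The step I expect to be the main obstacle is the ``for almost every $\Pm$'' in (ii): full column rank of $\Pm$ by itself does not guarantee that $\hat\Fm\Pm,\hat\Gm\Pm$ regenerate the whole behavior --- a poorly chosen $\Pm$ can drop $\rank\m{\Om_s & \Ts_s(\hat\Fm\Pm,\hat\Gm\Pm)}$ below $n_Q$ --- so genericity must genuinely be invoked, and the clean route is the Zariski-open-plus-nonempty argument anchored at $\Pm^\star$ rather than a transversality count over the structured family $\blkdiag(\I,\Pm,\dots,\Pm)$. A secondary point requiring care is keeping the proof of (iii) agnostic to injectivity of $\Om_s$ and $\Om_{s-1}$ and to the column count of $\Fm'$; both are handled by reading \eqref{eq:rearranged} purely as a homogeneous linear system, so that in particular $\nz\ge\nv$ is allowed.
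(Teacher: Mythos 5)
Your proposal is correct and follows essentially the same route as the paper: it reads \eqref{eq:rearranged} as the statement that $\Qm\m{\Zm_1 & \cdots & \Zm_s}$ is the fault-structured Toeplitz matrix (the paper's ``by construction''), derives (i) and (iii) from the basis property of the solution space, and proves (ii) by anchoring the column-space equality at the particular $\Pm$ recovering $(\Fm,\Gm)$ and invoking genericity exactly as the paper does. Your version merely makes explicit two points the paper leaves implicit --- the Zariski-open/nonempty justification of ``for almost all $\Pm$'' and the truncation from $\Bs|_0^s$ to $\Bs|_0^{\nx}$ before applying Theorem~\ref{thm:behavioral_equivalence} --- which is a welcome sharpening rather than a different argument.
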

\begin{proof}
	Result (i) is obtained by construction of $\hat{\Fm}.$ For (ii), note that the matrices $\Om_s'$ and $\Tm_s'$ for system $(\Am,\hat{\Fm}\Pm,\Cm,\hat{\Gm}\Pm)$ are obtained via
	{\setlength\arraycolsep{3pt}
	\[\m{\Om'_s & \Tm'_s} = \m{\Cm  & \hat{\Gm} \\ \Cm\Am & \Cm\hat{\Fm} & \hat{\Gm} \\ \vdots & \vdots & \ddots & \ddots \\ \Cm\Am^{s} & \Cm\Am^{s-1}\hat{\Fm} & \cdots & \Cm\hat{\Fm} & \hat{\Gm}}
	\m{\I \\ & \Pm \\ & & \ddots \\ & & & \Pm}.\]
    }%
    Since for one such $\Pm$ the resulting matrix has the same column space as $\m{\Om'_s & \Tm'_s},$ for almost all $\Pm$ this holds. Therefore, the corresponding restricted behaviors satisfy $\Bs|_0^s = \Bs'|_0^s$ and, by Theorem \ref{thm:behavioral_equivalence}, $\Bs = \Bs'$. Finally, (iii) is trivial by Theorem \ref{thm:behavioral_equivalence} and the construction of $\hat{\Fm}, \hat{\Gm}$.
\end{proof}
Theorem \ref{thm:FG} not only provides a method to recover matrices $\Fm',\Gm'$ from the residual system, it is also a realization algorithm. Given $(\Am,\Fm,\Cm,\Gm)$ we can use Theorem \ref{thm:FG} to generate any output behaviorally equivalent system of the form $(\Am,\Fm',\Cm,\Gm')$ by simulating $(\Am,\Fm,\Cm,\Gm)$ such that Assumption \ref{assum:main} holds, and performing the steps outlined above.

{\bf Recovering $\xv_0$ and $\vv$.} Once $\Fm',\Gm'$ are determined via Theorem \ref{thm:FG}, $\xiv_0$ and $\vv$ from \eqref{eq:residual_system} can be determined as any solution that satisfies $\rv = \Om_T\xiv_0 + \Tm'_T\vv$, which by construction is sure to exist, but is not unique if $(\Am,\Fm,\Cm,\Gm)$ has invariant zeros \cite{kirtikar2011delay}. Any existing method in the literature may be used, with different performances depending on the location of the transmission or invariant zeros. With $\tilde{\xv}_0$ recovered from the identification step in Section \ref{sec:id}, the full state trajectory $\xv = \xiv + \tilde{\xv}$ can be constructed.

	\section{NUMERICAL EXAMPLE{\footnotemark}}
\footnotetext{{To reproduce these results, see \url{https://github.com/ggleizer/fault_id}.}}
We consider the system of the form \eqref{eq:system} with
\begin{align*}
	\Am &= \m{0 & 1 & 0 \\ 0 & 0 & 1 \\ -0.25 & 0.75 & 0.25}\!, &\Bm &= \m{0\\0\\1}\!, &\Fm&=\m{0.938\\	0.328\\	0.115}\!,\\
	\Cm &= \m{1 & 0 & 0 \\ 0 & 1 & 0}\!, &\Dm &=\m{0\\0}\!, &\Gm&=\m{0 \\ 0}\!.
\end{align*}
The pair $(\Am,\Fm)$ is not controllable, and the residual system $(\Am,\Fm,\Cm,\Gm)$ has one transmission zero at infinity. 
We applied a white Gaussian pseudo-random input $\uv$ for 1000 time units, with random initial condition, $\vv(k) = 0.1+\sin(0.25k^{1.3})$ and no noise. Using PI-MOESP as outlined in Section \ref{sec:id}, we retrieved the matrices (in canonical observable form)
\begin{align*}
	\hat{\Am} &= \m{0 & 1 & 0 \\ 0 & 0 & 1 \\ -0.256  & 0.749 &  0.261}\!, &\hat\Bm &= \m{-0.004\\-0.023\\1.015}\!,\\
	\hat\Cm &= \m{1 & 0 & 0 \\ -0.006 &  0.978 &  0.0126}\!, &\hat\Dm &=\m{0.03\\0.016}\!.
\end{align*}
To recover $\Fm,\Gm$, first we considered the actual matrices $\Am,\Bm,\Cm,\Dm$ to generated the residual signal. \altextra{We applied the rank minimization approach o Section \ref{ssec:rankmin} to recover $\Fm,\Gm$. The solver took over 4 minutes, but returned a rank-2 solution, with recovered matrix $\bsm{\hat\Fm \\ \hat\Gm}$ not spanning $\bsm{\Fm \\ \Gm}$. In contrast, the approach}{The approach} outlined in Section \ref{ssec:fault_id} identified that $\nv = 1$ via Prop.~\ref{prop:rank_u} with $s=5$, and using Theorem \ref{thm:FG} {we obtained $\hat{\Fm}_1, \hat{\Gm}_1$ in \eqref{eq:recovered_fg}:
\begin{equation}\label{eq:recovered_fg}
	\m{\hat{\Fm}_1 \\ \hat{\Gm}_1} = \mm{2}{0 & 0.938 \\ 0 & 0.328 \\ 0	& 0.115 \\ \hline -0.944	& 0 \\ -0.33 & 0}, ~~~\m{\hat{\Fm}_2 \\ \hat{\Gm}_2} = \mm{2}{0 & 0.933 \\ 0 & 0.338 \\ 0	& 0.122 \\ \hline   -0.944	& 0 \\ -0.33 & 0},
\end{equation}%
}
that is, the second column is precisely $\bsm{\Fm \\ \Gm}$, while the first column gives an output behaviorally equivalent system. Notice that, if we had prior knowledge that the fault was not on the measurements, we could recover $\Fm,\Gm$ precisely.

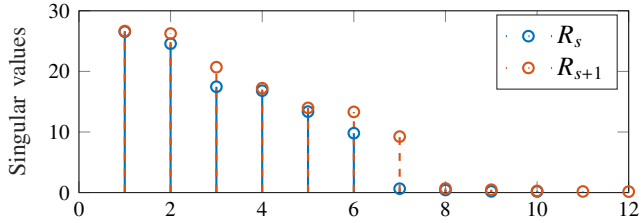
\begin{figure}
	\centering
%
%
\definecolor{mycolor1}{rgb}{0.00000,0.44700,0.74100}%
\definecolor{mycolor2}{rgb}{0.85000,0.32500,0.09800}%
\begin{tikzpicture}
	
\pgfplotsset{every tick label/.append style={font=\footnotesize}}

\begin{axis}[%
width=0.5\textwidth,
height=4cm,
xmin=0,
xmax=12,
ymin=0,
ymax=30,
ylabel style={font=\color{white!15!black} \small},
ylabel={Singular values},
axis background/.style={fill=white},
legend style={legend cell align=left, align=left, draw=white!15!black}
]
\addplot[thick, ycomb, color=mycolor1, mark=o, mark options={solid, mycolor1}] table[row sep=crcr] {%
1	26.5500789412737\\
2	24.5755309469592\\
3	17.464565160419\\
4	16.8178553234019\\
5	13.3593909879432\\
6	9.80428541092819\\
7	0.642137996198499\\
8	0.424179989481712\\
9	0.196427980417734\\
10	0.1811099220302\\
};
\addplot[forget plot, color=white!15!black] table[row sep=crcr] {%
0	0\\
12	0\\
};
\addlegendentry{$R_s$}

\addplot[ycomb, thick, dashed, color=mycolor2, mark=o, mark options={solid, mycolor2}] table[row sep=crcr] {%
1	26.6613566971983\\
2	26.2480832086168\\
3	20.6959018289427\\
4	17.2298675859753\\
5	13.9731111147254\\
6	13.3147972062507\\
7	9.23786667136778\\
8	0.674689402948074\\
9	0.473886886786674\\
10	0.271201768646024\\
11	0.184044483183736\\
12	0.157473233749879\\
};
\addplot[forget plot, color=white!15!black] table[row sep=crcr] {%
0	0\\
12	0\\
};
\addlegendentry{$R_{s+1}$}

\end{axis}
\end{tikzpicture}%
	\caption{Singular values of Hankel residual matrices $\Rm_5$ and $\Rm_{6}$ when using identified system to generate the residual. Clearly, only the seventh singular value of $\Rm_5$ is negligible compared to that of $\Rm_{6}$, indicating that $\nv=1$.}
	\label{fig:svs}
\end{figure}

Next, we consider the identified matrices $\hat\Am,\hat\Bm,\hat\Cm,\hat\Dm$ to generate the residual. Now, because the residual is no longer purely affected by the fault, but also by modeling mismatch, the rank condition of Prop.~\ref{prop:rank_u} is too frail to retrieve $\nv$. Instead, we inspect the singular values of $\Rm_s$ and $\Rm_{s+1}$ to see the difference in significant singular values, see Fig.~\ref{fig:svs}. Selecting a threshold for the singular values, we used the methods of Section \ref{ssec:fault_id}, retrieving {$\hat{\Fm}_2, \hat{\Gm}_2$ in \eqref{eq:recovered_fg},}
whose second column is within 1.4\% of the original $\hat\Fm$.

{\bf Monte-Carlo analysis.} To assess the performance of reconstructing $\Fm, \Gm$ from noisy data, we conduct Monte-Carlo simulation where 100 stable systems of the form \eqref{eq:system} were generated with $\nx=5, \nup=1, \ny=3, \nv=2$. Out of them, 25 have no transmission zeros, 25 have 1, 25 have 2, and 25 have 3. All transmission zeros were real and placed randomly via a normal distribution, leaving several systems non-minimum phase from fault to output. The input, initial state, and fault signals were selected as in the previous example, with $v_1(k) = 0.1+\sin(0.25k^{1.3})$ and $v_2(k) = 1 - 0.99^k + z(k)$, where $z(k)$ is a zero-mean i.i.d. Gaussian noise with unit variance. On top of that, colored measurement noise was introduced to the outputs with variance tuned to achieve an SNR of 40 dB. 

Given the subspace nature of fault matrices, we used for error metric the Grassmannian distance \cite{ye2016schubert} between $\m{\Fm\tran & \Gm\tran}\tran$ and $\m{\hat{\Fm}\tran & \hat{\Gm}\tran}\tran,$ normalized by the maximal Grassmannian distance possible in the ambient space, so that 0\% corresponds to equality and 100\% to orthogonal subspaces. The median error across all examples was 0.82\%. Box plots of the obtained errors can be seen in Fig.~\ref{fig:box_nz} as a function of the number of transmission zeros. When there were no transmission zeros, all recovered matrices were very close to perfect; once transmission zeros were present, while the majority of errors were still very small as seen by the median lines, a few cases observed large errors.

\begin{figure}
	\centering
%
%
\begin{tikzpicture}
	
\pgfplotsset{every tick label/.append style={font=\footnotesize}}

\begin{axis}[%
width=\linewidth,
height=4.1cm,
unbounded coords=jump,
xmin=0.5,
xmax=4.5,
xtick={1,2,3,4},
xticklabels={{0},{1},{2},{3}},
xlabel={Number of transmission zeros from $\vv$ to $\yv$},
ymin=-2,
ymax=30,
xlabel style={font=\color{white!15!black} \small},
ylabel style={font=\color{white!15!black} \small},
ylabel={Error (\%)},
axis background/.style={fill=white}
]
\addplot [color=black, dashed, forget plot]
  table[row sep=crcr]{%
1	8.2154407415061e-07\\
1	1.25492930217919e-06\\
};
\addplot [color=black, dashed, forget plot]
  table[row sep=crcr]{%
2	6.59425477262639\\
2	9.54464722446597\\
};
\addplot [color=black, dashed, forget plot]
  table[row sep=crcr]{%
3	16.6856508996249\\
3	29.3434130952109\\
};
\addplot [color=black, dashed, forget plot]
  table[row sep=crcr]{%
4	2.87754107743578\\
4	3.70170134874619\\
};
\addplot [color=black, dashed, forget plot]
  table[row sep=crcr]{%
1	0\\
1	0\\
};
\addplot [color=black, dashed, forget plot]
  table[row sep=crcr]{%
2	0.143002117051297\\
2	0.978712257046016\\
};
\addplot [color=black, dashed, forget plot]
  table[row sep=crcr]{%
3	0.0606432183868971\\
3	0.404879355000085\\
};
\addplot [color=black, dashed, forget plot]
  table[row sep=crcr]{%
4	0.0372813441184505\\
4	0.333512498194564\\
};
\addplot [color=black, forget plot]
  table[row sep=crcr]{%
0.875	1.25492930217919e-06\\
1.125	1.25492930217919e-06\\
};
\addplot [color=black, forget plot]
  table[row sep=crcr]{%
1.875	9.54464722446597\\
2.125	9.54464722446597\\
};
\addplot [color=black, forget plot]
  table[row sep=crcr]{%
2.875	29.3434130952109\\
3.125	29.3434130952109\\
};
\addplot [color=black, forget plot]
  table[row sep=crcr]{%
3.875	3.70170134874619\\
4.125	3.70170134874619\\
};
\addplot [color=black, forget plot]
  table[row sep=crcr]{%
0.875	0\\
1.125	0\\
};
\addplot [color=black, forget plot]
  table[row sep=crcr]{%
1.875	0.143002117051297\\
2.125	0.143002117051297\\
};
\addplot [color=black, forget plot]
  table[row sep=crcr]{%
2.875	0.0606432183868971\\
3.125	0.0606432183868971\\
};
\addplot [color=black, forget plot]
  table[row sep=crcr]{%
3.875	0.0372813441184505\\
4.125	0.0372813441184505\\
};
\addplot [color=blue, forget plot]
  table[row sep=crcr]{%
0.75	0\\
0.75	8.2154407415061e-07\\
1.25	8.2154407415061e-07\\
1.25	0\\
0.75	0\\
};
\addplot [color=blue, forget plot]
  table[row sep=crcr]{%
1.75	0.978712257046016\\
1.75	6.59425477262639\\
2.25	6.59425477262639\\
2.25	0.978712257046016\\
1.75	0.978712257046016\\
};
\addplot [color=blue, forget plot]
  table[row sep=crcr]{%
2.75	0.404879355000085\\
2.75	16.6856508996249\\
3.25	16.6856508996249\\
3.25	0.404879355000085\\
2.75	0.404879355000085\\
};
\addplot [color=blue, forget plot]
  table[row sep=crcr]{%
3.75	0.333512498194564\\
3.75	2.87754107743578\\
4.25	2.87754107743578\\
4.25	0.333512498194564\\
3.75	0.333512498194564\\
};
\addplot [color=red, forget plot]
  table[row sep=crcr]{%
0.75	4.74318692361997e-07\\
1.25	4.74318692361997e-07\\
};
\addplot [color=red, forget plot]
  table[row sep=crcr]{%
1.75	2.30829962455157\\
2.25	2.30829962455157\\
};
\addplot [color=red, forget plot]
  table[row sep=crcr]{%
2.75	1.72115621853197\\
3.25	1.72115621853197\\
};
\addplot [color=red, forget plot]
  table[row sep=crcr]{%
3.75	0.907293505423315\\
4.25	0.907293505423315\\
};
\addplot [color=black, only marks, mark=+, mark options={solid, draw=red}, forget plot]
  table[row sep=crcr]{%
2	17.5306445311511\\
2	19.3339286488376\\
2	20.2498452928214\\
2	23.0024849218859\\
};
\addplot [color=black, only marks, mark=+, mark options={solid, draw=red}, forget plot]
  table[row sep=crcr]{%
4	7.23259932015859\\
4	8.01194368576956\\
};
\end{axis}
\end{tikzpicture}%
	\caption{Grassmanian error between real and recovered fault matrices in the Monte-Carlo simulation.}
	\label{fig:box_nz}
\end{figure}
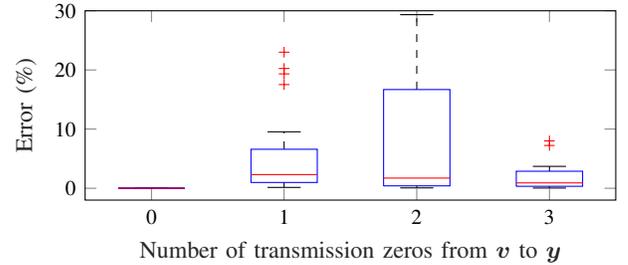

	\section{CONCLUSION}
	In this work we have presented a subspace method to identify an LTI system from faulty input--output data. The core of this contribution is the method to recover the fault matrices $\Fm, \Gm$ without any structural assumptions on them. This problem has no unique solution, and the presence of transmission zeros from fault to output can dramatically increase the set of systems that explain the data from a faulty experiment. What is the precise relation is still under research. In addition, the method is not always robust to noise, so the modifications to improve robustness also merits further research. The notion of output behavior equivalence is interesting from the perspective of realization theory, and has potential for use in other applications, such as actuator selection. Future work will be dedicated to extending this notion to other classes of systems.
	
	\section*{ACKNOWLEDGMENT}
	
	The author thanks Prof.~Michel Verhaegen for the valuable discussions and suggestions.
	
	\section*{APPENDIX}

\begin{lem}\label{lem:stability_ranges}
	For any $\Am \in \R^{n \times n}$, $\Rs(\Am^{n+1}) = \Rs(\Am^n)$.
\end{lem}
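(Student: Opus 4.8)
The plan is to exploit the fact that the subspaces $\Rs(\Am^k)$, $k \in \No$, form a nested decreasing chain
\begin{equation*}
	\R^n = \Rs(\Am^0) \supseteq \Rs(\Am^1) \supseteq \Rs(\Am^2) \supseteq \cdots,
\end{equation*}
which holds because $\Am^{k+1}\xv = \Am(\Am^k\xv) \in \Rs(\Am^k \Am) \subseteq \Rs(\Am^k)$; more directly, $\Rs(\Am^{k+1}) = \Am\,\Rs(\Am^k)$, so applying $\Am$ to both sides of an inclusion preserves it. Writing $d_k \coloneqq \dim \Rs(\Am^k) = \rank(\Am^k)$, this gives a weakly decreasing sequence of integers $n = d_0 \geq d_1 \geq \cdots \geq 0$.

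The key step is a stabilization lemma: if $d_k = d_{k+1}$ for some $k$, then $d_j = d_k$ for all $j \geq k$. Indeed, equality of dimensions together with the inclusion $\Rs(\Am^{k+1}) \subseteq \Rs(\Am^k)$ forces $\Rs(\Am^{k+1}) = \Rs(\Am^k)$; applying the map $\Am$ to this equality yields $\Rs(\Am^{k+2}) = \Am\,\Rs(\Am^{k+1}) = \Am\,\Rs(\Am^k) = \Rs(\Am^{k+1})$, and an easy induction propagates the equality to all later indices.

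To conclude, I would argue by cases on the first $n$ steps of the chain. If $d_k = d_{k+1}$ for some $k \in \{0,1,\dots,n-1\}$, the stabilization lemma immediately gives $\Rs(\Am^n) = \Rs(\Am^{n+1})$. Otherwise the first $n$ inequalities are all strict, $d_0 > d_1 > \cdots > d_n$, and since each strict drop decreases the integer dimension by at least one we get $d_n \leq d_0 - n = n - n = 0$; hence $d_n = 0$, i.e.\ $\Am^n = \O$, so $\Rs(\Am^{n+1}) = \{0\} = \Rs(\Am^n)$ as well. In either case $\Rs(\Am^{n+1}) = \Rs(\Am^n)$, as claimed.

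There is no real obstacle here — this is a routine pigeonhole-on-dimensions argument; the only point requiring a little care is the stabilization lemma (``once the ranges coincide, they coincide forever''), which is what makes the bound $n$ (rather than something larger) suffice.
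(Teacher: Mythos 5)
Your proof is correct and follows essentially the same route as the paper's: the key stabilization lemma (once consecutive ranges coincide, they coincide forever) combined with the observation that the rank can strictly drop at most $n$ times. The paper packages the counting step as a proof by contradiction, propagating strict inclusions backward until the rank would become negative, whereas you argue forward with a case split, but the content is identical.
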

\begin{proof}
	First, note that for any $k\geq 1$, $\Rs(\Am^{k}) = \Rs(\Am^{k-1}) \implies \Rs(\Am^{k+1}) = \Rs(\Am^k)$, as there exists $\Qm$ such that $\Am^{k}\Qm = \Am^{k-1}$, and, pre-multiplying by $\Am$ gives that $\Am^{k+1}\Qm = \Am^{k}$. Now, since $\Rs(\Am^{k+1}) \subseteq \Rs(\Am^k)$ for any $k$, we establish that, conversely, $\Rs(\Am^{k+1}) \subsetneq \Rs(\Am^k) \implies \Rs(\Am^{k}) \subsetneq \Rs(\Am^{k-1})$ and, consequently,  $\rank(\Am^{k+1}) < \rank(\Am^k) \implies \rank(\Am^{k}) < \rank(\Am^{k-1})$ . We now show that it is impossible that $\Rs(\Am^{n+1}) \subsetneq \Rs(\Am^n)$. If this were the case, then 
	\[ \rank(\Am^{n+1}) \leq \rank(\Am^n) - 1 \leq \cdots \leq \rank(\I) - n - 1 = -1,\]
	but ranks are never negative.
\end{proof}

	\bibliographystyle{ieeetr} 
	\bibliography{mybib}

\end{document}